\newtheorem{theorem}{Theorem}[section]
\newtheorem{lemma}[theorem]{Lemma}
\newtheorem{claim}[theorem]{Claim}
\newtheorem{proposition}[theorem]{Proposition}
\newtheorem{fact}[theorem]{Fact}
\newtheorem{observation}[theorem]{Observation}
\theoremstyle{definition}
\newtheorem{definition}[theorem]{Definition}
\theoremstyle{remark}
\newcommand{\n}{\ensuremath{n}} 
\newcommand{\eps}{\varepsilon}
\newcommand{\Dyes}{\mathcal{D}^+}
\newcommand{\Dno}{\mathcal{D}^-}
\newcommand{\Dhat}{\hat{\mathcal{D}}^-}
\newcommand{\hypercube}{\ensuremath{\{0,1\}^\n}}
\newcommand{\maj}{\operatorname{Maj}}
\newcommand{\E}{\operatorname{\mathbb{E}}}
\newcommand{\switch}{\operatorname{\mathbb{S}}}
\newcommand{\ord}[1]{\ensuremath{{#1}^\text{th}}}
\newcommand{\ceil}[1]{\ensuremath{\left\lceil {#1} \right\rceil}}
\newcommand{\floor}[1]{\ensuremath{\left\lfloor {#1} \right\rfloor}}
\newcommand{\viol}[1]{\ensuremath{D_f^-({#1})}}
\newcommand{\dviol}[1]{\ensuremath{A_f^-({#1})}}
\newcommand{\goodviol}[1]{\ensuremath{E_f^-({#1})}}
\newcommand{\EdgeViolations}{\texttt{Edge-Violations}}
\newcommand{\eEdgeViolations}{\emph{\EdgeViolations}}
\newcommand{\EventEstimation}{\texttt{Matching-Estimation}}
\newcommand{\eEventEstimation}{\emph{\EventEstimation}}
\newcommand{\Event}{\texttt{Capture}}
\newcommand{\eEvent}{\emph{\Event}}
\newcommand{\CorrectEstimates}{\texttt{CorrectEstimates}}
\newcommand{\appfactor}{\ensuremath{\eps /\sqrt{n\log n}}}
\newcommand{\appfactorc}{\ensuremath{c_0\cdot \appfactor}}
\newcommand{\af}{\ensuremath{\gamma(n,\eps)}}
\newcommand{\eventfar}{\texttt{FAR}}
\newcommand{\eventbad}{\texttt{BAD}}
\newcommand{\ApproxMono}{\texttt{ApproxMono}}
\newcommand{\eApproxMono}{\emph{\ApproxMono}}
\newcommand{\close}{\textsf{close}}
\newcommand{\far}{\textsf{far}}
\newcommand{\poly}{\mathrm{poly}}
\newcommand{\dist}{\mathrm{dist}}
\newcommand{\epsf}{\ensuremath{\eps_f}}
\newcommand{\Ex}{\mathop{\E}}
\newcommand{\Prx}{\mathop{\Pr}}
\newcommand{\bS}{\boldsymbol{S}}
\newcommand{\bx}{\boldsymbol{x}}
\newcommand{\bi}{\boldsymbol{i}}
\newcommand{\red}{\mathrm{red}}
\newcommand{\blue}{\mathrm{blue}}
\newcommand{\calS}{\mathcal{S}}
\newcommand{\cA}{\ensuremath{\mathcal{A}}}
\newcommand{\cB}{\ensuremath{\mathcal{B}}}
\newcommand{\cP}{\ensuremath{\mathcal{P}}}
\newcommand{\col}{\mathrm{col}}
\newcommand{\eee}{\ensuremath{\mathrm{e}}}
\newcommand{\bc}{counter}
\newcommand{\bw}{\boldsymbol{w}}
\newcommand{\by}{\boldsymbol{y}}
\newcommand{\bz}{\boldsymbol{z}}
\newcommand{\EE}{$\EventEstimation(S, \delta, f)$}
\newcommand{\EV}{$\EdgeViolations(\delta, f)$}
\newcommand{\EventArgs}{\Event(x,{\bS},f)}
\newcommand{\eEventArgs}{\eEvent(x,{\bS},f)}
\newcommand{\bM}{\boldsymbol{M}}
\newcommand{\bA}{\boldsymbol{A}}
\newcommand{\obM}{\overline{\bM}}
\newcommand{\bP}{\boldsymbol{P}}
\newcommand{\boldf}{\boldsymbol{f}}
\newcommand{\match}[1]{\ensuremath{\mathcal{M}_{#1}}}
\newcommand{\uinf}{\ensuremath{U_f^-}}
\newcommand{\mumax}{m}
\newcommand{\bmuhat}{\hat{\boldsymbol{\mu}}}
\newcommand{\bnu}{\boldsymbol{\nu}}
\title{Approximating the Distance to Monotonicity of Boolean Functions\thanks{This work was done in part while the authors were visiting the Simons Institute for the Theory of Computing.
A preliminary version of this work appeared in the proceedings of ACM-SIAM Symposium on Discrete Algorithms (SODA), 2020~\cite{PRW20}.
}}
\author{Ramesh Krishnan S. Pallavoor\thanks{Department of Computer Science, Boston University. Email: {\sf rameshkp@bu.edu, sofya@bu.edu}. The work of these authors was partially supported by NSF award CCF-1909612.}
\and Sofya Raskhodnikova\footnotemark[\value{footnote}]
\and Erik Waingarten\thanks{Department of Computer Science, Columbia University. Email: {\sf eaw@cs.columbia.edu}. This work is supported in part by NSF Graduate Research Fellowship (Grant No. DGE-16-44869).}}
\date{}
\begin{document}

\maketitle

\begin{abstract}
We design a nonadaptive algorithm that, given oracle access to a
function $f\colon \{0,1\}^n \to \{0,1\}$ which is $\alpha$-far from monotone, makes poly$(n, 1/\alpha)$ queries and returns an estimate that, with high probability, is an $\widetilde{O}(\sqrt{n})$-approximation to the distance of $f$ to monotonicity. The analysis of our algorithm relies on an improvement to the directed isoperimetric inequality of Khot, Minzer, and Safra (SIAM J. Comput., 2018).
Furthermore, we rule out a poly$(n, 1/\alpha)$-query nonadaptive algorithm that approximates the distance to monotonicity significantly better by showing that, for all constant $\kappa > 0,$ every nonadaptive $n^{1/2 - \kappa}$-approximation algorithm for this problem requires $2^{n^\kappa}$ queries.
This answers a question of Seshadhri (Property Testing Review, 2014) for the case of nonadaptive algorithms.
We obtain our lower bound by proving an analogous bound for erasure-resilient (and tolerant) testers.
Our method also yields the same lower bounds for unateness and being a $k$-junta.
\end{abstract}

\noindent
{\bf Keywords --}
sublinear algorithms, analysis of Boolean functions, property testing, tolerant and erasure-resilient testing.

\section{Introduction}\label{sec:intro}
Property testing~\cite{RS96,GGR98} was introduced to provide a formal model for studying algorithms for massive datasets. For such algorithms to achieve their full potential, they have to be robust to adversarial corruptions in the input. Tolerant property testing~\cite{PRR06} and, equivalently\footnote{A distance approximation algorithm can be easily converted to a tolerant tester and vice versa, with at most a logarithmic increase in the query complexity. See~\cite{PRR06} and Section~\ref{intro:connection} for a discussion of the relationship.},
distance approximation, generalize the standard property testing model to allow for errors in the input.

In this work, we study
the problem of approximating the distance to
several properties of Boolean functions, with the focus on monotonicity.  A function $f\colon\hypercube\to\{0,1\}$ is {\em monotone} if $f(x) \leq f(y)$ whenever $x \preceq y$, i.e., $x_i \leq y_i$ for all $i \in [n]$.
The (relative) distance between two functions over $\hypercube$ is the fraction of the domain points on which they differ.
Given a function $f$ and a set $\cal P$ (of functions with the desired property), the distance from $f$ to $\cal P$, denoted $\dist(f, \cal P)$, is the
distance from $f$ to the closest function in $\cal P$. Given $\alpha\in(0,1/2)$, a function is $\alpha$-far from $\cal P$ if
$\dist(f, \cal P)\geq \alpha$; otherwise, it is $\alpha$-close. The distance of a function $f$ to monotonicity is denoted \epsf. We study randomized algorithms which, given oracle access to a Boolean function, output an approximation to \epsf by making only a small number of queries.
Specifically, given an input function $f\colon \hypercube \to \{0,1\}$ which is promised to be at least $\alpha$-far from monotone, an algorithm that achieves a $c$-approximation for $c > 1$ should output a real number $\hat{\eps} \in (0, 1)$ that satisfies, with probability at least $2/3$,
\[\epsf \leq \hat{\eps} \leq c \cdot \epsf.\]
Our goal is to understand the best approximation ratio $c$ that can be achieved\footnote{An equivalent way of stating this type of results is to express the approximation guarantee in terms of both multiplicative and additive error, but with no lower bound on the distance. Purely multiplicative approximation would require correctly identifying inputs with the property, which generally cannot be achieved in time sublinear in the size of the input.} in time polynomial in the dimension $\n$ and $1/\alpha$.

\begin{definition}[Hypercube, edge, $i$-edge, decreasing edge]\label{def:hypercube}
The points of the domain $\hypercube$ can be viewed as vertices of an $n$-dimensional hypercube. Two vertices $x,y\in\hypercube$ form an edge $\{x,y\}$ of the hypercube  if they differ in exactly one coordinate. An edge $\{x,y\}$ is an $i$-edge (or an {\em edge along dimension} $i$) if $x$ and $y$ differ only in their $\ord{i}$ coordinates, that is, $x_i \neq y_i$, but $x_j = y_j$ for all $j\in [n] \setminus \{i\}$. An $i$-edge is {\em decreasing} with respect to a function $f$ if $x_i<y_i$ but $f(x)>f(y)$.
\end{definition}

Fattal and Ron~\cite{FR10} investigated a more general problem of approximating the distance to monotonicity of functions on the hypergrid $[t]^\n$. They gave several
algorithms which achieve an approximation ratio $O(\n)$ in time polynomial in $\n$ and $1/\alpha$; for better approximations, they designed an algorithm with the approximation ratio $\n/k,$ for every $k,$ but with running time exponential in~$k$.
It follows from early works on monotonicity testing that for the special case of the hypercube domain, an $O(n)$-approximation
can be obtained by simply estimating the fraction $\nu_f$ of edges that are decreasing with respect to $f$. 
For a Boolean function $f\colon \hypercube \to \{0,1\}$, as shown in~\cite{DGLRRS99,Ras99,GGLRS00,FLNRRS02}, $\epsf/\n\leq\nu_f\leq 2\epsf$. Thus, by obtaining a constant-factor approximation to the fraction of decreasing edges, one gets an $O(n)$-approximation to $\epsf$.

Prior to our work, no nontrivial hardness results were known for  approximating the distance to monotonicity, other than the corresponding lower bounds on (standard) property testing.

\subsection{Our Contributions}

 All our results are on {\em nonadaptive} algorithms. An algorithm is {\em nonadaptive} if it makes all of its queries in advance, before receiving any answers; otherwise, it is {\em adaptive}. Nonadaptive algorithms are especially straightforward to implement and achieve maximal parallelism.
Additionally, every nonadaptive algorithm that approximates the distance to monotonicity of Boolean functions can be easily converted to an algorithm for approximating the $L_p$-distance to monotonicity of real-valued functions~\cite{BerRY14}.

\subsubsection{Approximating the Distance to Monotonicity}

We design a nonadaptive $\widetilde{O}(\sqrt{\n})$-approximation algorithm for the distance to monotonicity
that runs in time polynomial in the number of dimensions, $\n,$ and $1/\alpha$.
Our algorithm improves on the $O(\n)$-approximation obtained by Fattal and Ron~\cite{FR10}.

\begin{theorem}[Approximation Algorithm]\label{thm:alg}
There is a nonadaptive (randomized) algorithm that, given a parameter $\alpha\in(0,1/2)$ and oracle access to a Boolean function $f \colon \hypercube \to \{0,1\}$ which is $\alpha$-far from monotone, makes $\poly(n, 1/\alpha)$  queries and returns an estimate that, with probability at least $2/3$, is an $\widetilde{O}(\sqrt{n})$-approximation to $\epsf$.
\end{theorem}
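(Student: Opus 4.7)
The plan is to approximate $\dist(f,\Mono)$ via the size of a large matching of violating edges, rather than via the raw count of decreasing edges which only yields the $O(n)$-approximation of Fattal--Ron. Let $\match{f}$ denote a (near-)maximum matching of hypercube edges $(x,y)$ with $x\prec y$ and $f(x)>f(y)$. On one hand, the endpoints of distinct matched edges are disjoint, so any monotone function must differ from $f$ on at least one endpoint per matched pair; hence $|\match{f}|/2^n \le \dist(f,\Mono)$. On the other hand, an isoperimetric statement of the type underlying the $\tilde O(\sqrt n/\eps^2)$ nonadaptive monotonicity tester asserts that if $f$ is $\alpha$-far from monotone, then $|\match{f}| \ge \tilde\Omega\bigl(\alpha\cdot 2^n/\sqrt n\bigr)$. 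These two bounds sandwich $\dist(f,\Mono)$ within an $\tilde O(\sqrt n)$ factor of $|\match{f}|/2^n$, so it suffices to estimate the latter to within a constant factor.

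The problem therefore reduces to designing a nonadaptive \eEventEstimation{} subroutine that approximates $|\match{f}|/2^n$ up to constant factors using $\poly(n,1/\alpha)$ queries. I would implement it by defining, for each point $x\in\hypercube$ and each random sample $\bS$ of coordinates and auxiliary bits, a local ``capture'' event \EventArgs{} that fires when the queried values of $f$ on $x$ and its $\bS$-specified neighbors expose a violating edge incident to $x$ and a deterministic tie-breaking rule (e.g., smallest coordinate index, breaking ties by the lower endpoint) names $x$ the canonical owner of that edge. Averaging over $\poly(n,1/\alpha)$ independent samples $(\bx,\bS)$ produces an estimate of $\Pr_{\bx,\bS}[\EventArgs]$. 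The tie-breaking rule guarantees that every capture corresponds to a unique edge of some matching, bounding the estimate above by $2|\match{f}|/2^n$; the construction of $\bS$ must guarantee, in the other direction, that each edge of some near-maximum matching is captured with inverse-polynomial probability.

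The main obstacle is this lower-bound direction: showing that captures recover at least a constant fraction of $|\match{f}|/2^n$. This does not follow from the mere existence of a large matching; one must produce a large matching whose edges are ``robust,'' in the sense that at least one endpoint of each matched edge has sufficiently many violating partners in random directions so that a $\poly(n)$-size random sample $\bS$ already witnesses one with probability $1/\poly(n)$. Quantitatively, this robustness is exactly what forces the $\tilde O(\sqrt n)$ loss in the sandwich inequality, and the detailed combinatorial/isoperimetric argument underlying it is where most of the technical effort will go. Once the structural claim is established, a Hoeffding bound on the $\poly(n,1/\alpha)$ samples turns $\Pr_{\bx,\bS}[\EventArgs]$ into a high-probability multiplicative estimate, and rescaling by the $\tilde\Theta(\sqrt n)$ sandwich factor yields the returned value. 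Nonadaptivity is preserved because every $(\bx,\bS)$ is drawn in advance and $f$ is queried only on points that depend deterministically on the random coins, keeping the total query complexity $\poly(n,1/\alpha)$.
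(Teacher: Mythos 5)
Your high-level architecture matches the paper's: a locally-verifiable ``capture'' event whose probability lower-bounds $2\cdot\dist(f,\Mono)$ via a matching argument, plus an isoperimetric statement showing that far functions make some such statistic large, all wrapped in a nonadaptive sampling scheme and a geometric search over scales. The upper-bound direction you give (captures inject into a matching of decreasing edges, hence the estimate is at most $2\cdot\dist(f,\Mono)$) is essentially the paper's Lemma~\ref{lem:dist-lb}, and the reduction from a tolerant tester to an approximation algorithm is the standard one the paper also uses.

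The genuine gap is the step you yourself flag as ``where most of the technical effort will go'': proving that for $\eps$-far $f$ the capture statistic is at least $\tilde\Omega(\eps/\sqrt n)$ at some sampled scale. Your plan --- fix a near-maximum matching $\match{f}$, argue $|\match{f}|\ge\tilde\Omega(\eps 2^n/\sqrt n)$, and then show each matched edge is captured with inverse-polynomial probability --- does not quantitatively deliver the bound. A per-edge capture probability of $1/\poly(n)$ (say $1/d$ when endpoints have decreasing degree about $d$) only yields a capture probability of order $(|\match{f}|/2^n)\cdot(1/\poly(n))$, which can be as small as $\eps/(n^{3/2}\polylog n)$ and is useless for a $\tilde O(\sqrt n)$-approximation; anchoring the analysis to one fixed matching loses exactly the factor you need. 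The paper does not reason about a maximum matching at all (indeed its Remark notes that known matching-size estimators are adaptive). Instead it invokes the \emph{robust, colored} isoperimetric inequality of Khot--Minzer--Safra (Theorem~\ref{thm:kms}, with each decreasing edge charged to its higher-degree endpoint), buckets vertices twice --- by total decreasing degree $d$ and by colored degree $s$ --- matches the sampling rate $1/d$ to the bucket, and, crucially, runs this argument only under the hypothesis that the total number of decreasing edges is below $\eps\sqrt n\cdot 2^n$; that hypothesis is what converts the bucket inequality into $\sqrt{s^*}/d^*\ge\tilde\Omega(1/\sqrt n)$ rather than a $1/n$-type loss, and the complementary regime is handled separately by the \eEdgeViolations{} edge-count estimator (line 2 of \ApproxMono). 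Your proposal contains neither this case split nor the colored/bucketed isoperimetric argument (nor the removal of the $\log(1/\eps)$ factor from the KMS bound, needed for small $\alpha$), so as written the lower-bound half of the sandwich is unsupported.
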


Our algorithm works by estimating (in addition to the fraction $\nu_f$ of decreasing edges) the size of a particular matching from a class of matchings parameterized by subsets $S \subseteq [n]$ and  consisting of decreasing edges along the dimensions in $S$. For every $S$, the size of a matching 
of decreasing edges, divided by $2^{n}$, is a lower bound for $\epsf$. This is because any monotone function $g\colon \{0,1\}^n \to \{0,1\}$ must disagree with $f$ on at least one endpoint of each decreasing edge. The important feature of this class of matchings is that the
membership of a given edge in a specified matching can be verified locally by querying $f$ on the endpoints of the edge and their neighbors. To analyze our algorithm, we use a slightly improved version of the (robust) directed isoperimetric inequality by Khot et al.~\cite{KMS15}. Our improvements to isoperimetric inequalities of~\cite{KMS15} are discussed in Section~\ref{sec:intro-KMS} and stated in Theorems~\ref{thm:kms} and~\ref{thm:kms-tal}.
Intuitively, if the input function $f$ is $\widetilde{O}(\eps/\sqrt{n})$-close to monotone, then both the fraction of violated edges and the normalized sizes of all matchings considered by our algorithm are below some threshold $\widetilde{\Theta}(\eps/\sqrt{n})$.
We use Theorem~\ref{thm:kms} to show (in Lemma~\ref{lem:event-occurs}) that if $f$ is $\eps$-far from monotone, then either $\nu_f$ is above the threshold or our algorithm is likely to sample some set $S \subseteq [n]$ where the corresponding matching
has normalized size $\widetilde{\Omega}(\eps / \sqrt{n})$.
This allows us to get an $\widetilde{O}(\sqrt n)$-approximation to the distance to monotonicity.

\subsubsection{Improvements in the Isoperimetric Inequalities from~\texorpdfstring{\cite{KMS15}}{Khot et al. (2018)}}\label{sec:intro-KMS}
 In Section~\ref{sec:improving-thm1.9}, we give a sketch of the proof of slightly improved versions of the isoperimetric inequalities of Khot et al.~\cite[Theorems~1.6 and 1.9]{KMS15}. The improved version of~\cite[Theorem~1.6]{KMS15} is stated next.

For all $x\in\hypercube$, define the {\em negative influence} $I_f^-(x)$ to be equal to $0$ if $f(x)=0$ and equal to the number of decreasing edges incident on $x$ if $f(x) = 1$. Note that each decreasing edge ``counts'' towards the endpoint with the function value 1.

\begin{theorem}[Improvement of Theorem~1.6 from~\cite{KMS15}]
\label{thm:kms-tal}
For every function $f\colon \hypercube \to \{0,1\}$,
\begin{align}
\Ex_{\bx \sim \hypercube} \left[\sqrt{I_f^-(\bx)} \right] = \Omega(\epsf). \label{eq:kms-tal}
\end{align}
\end{theorem}

The improved version of~\cite[Theorem~1.9]{KMS15} is Theorem~\ref{thm:kms} that appears later. It is a {\em robust} version of Theorem~\ref{thm:kms-tal}. Specifically, it holds for every 2-coloring of the edges of $f$. The color of each decreasing edge indicates whether it should be ``counted'' towards the endpoint with value 1 or the endpoint with value 0, and the negative influence is generalized accordingly to the {\em robust} version.

Theorems~1.6 and~1.9 in~\cite{KMS15} state that the left-hand side of \eqref{eq:kms-tal} (and similarly its counterpart \eqref{eq:kms} in Theorem~\ref{thm:kms}) is  $\Omega(\frac{\epsf}{\log n + \log(1/\epsf)})$.
We slightly modify the proof of~\cite{KMS15} to get a stronger lower bound of $\Omega(\epsf)$.
Using the original, weaker inequality for our algorithm would result in an approximation to the distance to monotonicity within a multiplicative factor of $\sqrt{n} \cdot\poly(\log n, \log(1/\alpha))$. This would mean that our algorithm is an $\widetilde{O}(\sqrt{n})$-approximation only if $\epsf \geq{1}/{2^{\poly(\log(n))}}$.

\subsubsection{Lower Bounds for Monotonicity, Unateness, and Being a \texorpdfstring{$k$}{k}-Junta}

We show that a slightly better approximation for the distance to monotonicity, specifically, with a ratio of $n^{1/2-\kappa}$ for an arbitrarily small constant $\kappa>0,$ requires exponentially many queries in $n^{\kappa}$ for every nonadaptive algorithm.

\begin{theorem}[Approximation Lower Bound]\label{thm:mono-approx-lb}
For all constant $\kappa\in(0,1/2),$ there exist $\alpha=O(1/n^{1-\kappa})$ and $\eps=\Omega(1/\sqrt n)$ (that is, with $\frac \eps \alpha=\Omega(n^{1/2-\kappa})$),  for which every nonadaptive algorithm requires more than $2^{n^\kappa}$ queries to $f\colon\hypercube\to\{0,1\}$ to distinguish functions $f$ that are $\alpha$-close to monotone from those that are $\eps$-far
from monotone with probability at least 2/3.
\end{theorem}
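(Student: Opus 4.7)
The plan is to apply Yao's minimax principle by exhibiting two distributions $\Dyes$ and $\Dno$ on Boolean functions $\hypercube\to\{0,1\}$ such that a draw from $\Dyes$ is $\alpha$-close to monotone with probability $1-o(1)$, a draw from $\Dno$ is $\eps$-far from monotone with probability $1-o(1)$, and for every fixed query set $Q\subseteq\hypercube$ with $|Q|\le 2^{n^\kappa}$ the joint distributions of $(\boldf(x))_{x\in Q}$ under $\boldf\sim\Dyes$ and $\boldf\sim\Dno$ have total variation distance $o(1)$. Any nonadaptive algorithm distinguishing the two cases with constant advantage would separate these distributions, so producing such a pair suffices.

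I would construct the distributions in two stages, following the hint in the abstract. First, I would build an analogous pair $(\mathcal{E}^+,\mathcal{E}^-)$ over partially specified functions $\hypercube\to\{0,1,\perp\}$ with at most an $\alpha$-fraction of $\perp$'s, where every function in the support of $\mathcal{E}^+$ admits a monotone completion, every completion of a function in the support of $\mathcal{E}^-$ is $\eps$-far from monotone, and crucially the marginal distribution of the $\perp$-locations is \emph{identical} in the two cases. Second, to obtain $(\Dyes,\Dno)$, I would replace every $\perp$ by an independent uniformly random bit. Yes-samples then differ from some monotone completion only on erased positions and are therefore $\alpha$-close to monotone, while no-samples remain $\eps$-far from monotone since every completion was. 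Because the erasure pattern is identically distributed, a query landing on an erased coordinate returns an independent uniform bit in both cases, so query-access indistinguishability transfers from $(\mathcal{E}^+,\mathcal{E}^-)$ to $(\Dyes,\Dno)$.

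To construct $(\mathcal{E}^+,\mathcal{E}^-)$, the plan is to plant a ``hard gadget'' on random low-dimensional subcubes. Fix a common monotone skeleton---for example, a truncated halfspace supported near the middle layers of $\hypercube$---and sample a uniformly random $R\subseteq[n]$ of size $r=\Theta(n^\kappa)$. On each of the $2^{n-r}$ subcubes determined by fixing the coordinates outside $R$, install a gadget: within the gadget the two distributions agree on a common set of $\perp$-locations (accounting for the $\alpha$-fraction of erasures) but may disagree on the $0/1$ values on the remaining positions; outside the gadget both agree with the skeleton. Design the gadget so that in the yes case it admits a monotone completion within the subcube, while in the no case every completion of its $\perp$'s is $\Omega(1)$-far from monotone as a function on the subcube. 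The robust directed isoperimetric inequality of Theorem~\ref{thm:kms} or Theorem~\ref{thm:kms-tal}, applied inside the $r$-dimensional subcube, is what certifies this latter property. Averaging over the subcubes then yields a global distance $\eps=\tilde\Omega(\alpha\cdot n^{1/2-\kappa})$ from monotone; choosing $\alpha=\poly(1/n)$ and $r$ slightly smaller than $n^\kappa$ to absorb the polylog factors in $\tilde\Omega$ produces the required $\eps=\poly(1/n)$ and $\eps/\alpha=\Omega(n^{1/2-\kappa})$.

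The main obstacle will be establishing the $o(1)$-TV bound for arbitrary $Q$. Since the informative content of the two distributions is confined to $2^{n-r}$ planted subcubes of dimension $r$ and the algorithm has only $q\le 2^{n^\kappa}$ queries, a careful birthday-style analysis of how queries distribute across these random subcubes should show that with probability $1-o(1)$ the algorithm's view inside every single planted subcube is too sparse to detect the gadget, while outside the gadgets both distributions agree with the skeleton and on erased positions both return independent uniform bits. The delicate balance will be to design the gadget so that it is simultaneously (i) robust enough that no filling of its $\perp$'s on the no side can bring the subcube $o(1)$-close to monotone, and (ii) concealed enough to hide the yes/no distinction from any small local view of its non-erased coordinates. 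These two requirements pull in opposite directions and determine the polylogarithmic factors hidden in the $n^{1/2-\kappa}$ ratio.
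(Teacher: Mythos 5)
Your outer framework is sound and matches the paper's route: the paper also derives Theorem~\ref{thm:mono-approx-lb} from a nonadaptive lower bound for $\alpha$-erasure-resilient testing (Theorem~\ref{thm:mono-ER-lb}), and your step of replacing the $\bot$'s by independent uniform bits is a legitimate variant of the paper's observation that a tolerant tester yields an erasure-resilient tester (since the erasure pattern has the same distribution in both families, the replacement is a common post-processing and cannot increase the view distance). The gap is that everything that actually makes the theorem true --- the construction of $(\mathcal{E}^+,\mathcal{E}^-)$ and the proof that $2^{n^\kappa}$ nonadaptive queries cannot separate them --- is deferred to an unspecified ``gadget,'' and the concrete choices you do commit to are problematic. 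In particular, Theorems~\ref{thm:kms} and~\ref{thm:kms-tal} cannot ``certify'' that every completion of the no-gadget is $\Omega(1)$-far from monotone: these are implications \emph{from} far-ness \emph{to} a large directed edge-boundary quantity, so they give necessary conditions only and are useless for lower-bounding distance. (In the paper they are used solely for the algorithm of Theorem~\ref{thm:alg}; the lower bound certifies far-ness by explicit violation matchings built from a majority/anti-majority structure, as in Lemma~\ref{lem:eps}.) If anything, the edge-boundary direction works against you: a function on an $r$-dimensional subcube that is $\Omega(1)$-far from monotone necessarily has $\Omega(2^r)$ decreasing edges, and since the erased fraction inside your gadget must be only about $n^{\kappa-1/2}$ to deliver your ratio, most of these are distance-one violations between non-erased points, whereas the monotone-completable yes-gadget has none; your sketch never addresses how such abundant local evidence is to be hidden from a tester that can query a point together with all $n$ of its neighbors.

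The indistinguishability and parameter claims are also unsubstantiated. With gadgets supported on the $2^{n-r}$ subcubes obtained by fixing the coordinates outside a single random $R$ of size $r=\Theta(n^\kappa)$, two queried points at Hamming distance $t\le r$ lie in the same subcube with probability roughly $(r/n)^{t}$, which is only polynomially small for small $t$; a union bound over the $\binom{q}{2}\approx 2^{2n^\kappa}$ pairs therefore cannot be closed, and a nonadaptive tester can deliberately flood a single gadget with correlated queries (a point and all its neighbors, or a whole Hamming ball of radius $\Theta(n^\kappa/\log n)$), so ``the view inside every planted subcube is too sparse'' is simply false; the whole burden falls back on the unspecified gadget. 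Contrast this with the paper's construction: the hidden object is a uniformly random half $M$ of the coordinates, the action subcubes have dimension $n/2$, the erasures form a band of $2n^\kappa+1$ middle layers of each action subcube, and the yes-functions are constant ($0$ or $1$) on each action subcube while the no-functions are majority/anti-majority of the non-control bits. Consequently the only informative query pairs are two points in the same action subcube on opposite sides of the band, which must agree on all of $M$ while differing in at least $2n^\kappa+2$ coordinates --- an event of probability at most $2^{-2n^\kappa-2}$ over $M$, which does beat the union bound over $\binom{q}{2}$ pairs for $q\le 2^{n^\kappa}$. The same geometry is what produces the ratio you assert without derivation: the band occupies a $\Theta(n^\kappa/\sqrt n)$ fraction of each action subcube (binomial middle-layer measure in dimension $n/2$) and the action subcubes are a $\Theta(1/\sqrt n)$ fraction of $\hypercube$, giving $\alpha=O(n^{\kappa-1})$, while every completion of a no-function must change a constant fraction of each action subcube, giving $\eps=\Omega(1/\sqrt n)$. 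An $n^\kappa$-dimensional gadget has no comparable $\sqrt n$-scale mechanism (a single layer of it already has measure $\Omega(n^{-\kappa/2})$), so the claim $\eps=\tilde\Omega(\alpha\, n^{1/2-\kappa})$ is not supported by anything in your sketch.
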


This result, in combination with Theorem~\ref{thm:alg}, answers an open question on the problem of approximating the distance to monotonicity by Seshadhri~\cite{Ses14} for the case of nonadaptive algorithms. It is the first lower bound for this problem, and it rules out nonadaptive  algorithms that achieve approximations substantially better than $\sqrt{n}$ with $\poly(n, 1/\alpha)$ queries, demonstrating that Theorem~\ref{thm:alg}  is essentially tight.
This bound is exponentially larger than the corresponding lower bound in the standard property testing model and, in fact, than the running time of known algorithms for testing monotonicity. We elaborate on this point in the discussion below on separation.

To obtain Theorem~\ref{thm:mono-approx-lb}, we investigate a variant of the property testing model, called {\em erasure-resilient} testing. This variant, proposed by Dixit et al.~\cite{DRTV18}, is intended to study property testing in the presence of adversarial erasures. An erased function value is denoted by $\bot$. An $\alpha$-erasure-resilient $\eps$-tester for a desired property gets oracle access to a function $f\colon\hypercube\to {\{0,1,\bot\}}$ that has at most an $\alpha$ fraction of values erased. The tester has to accept (with probability at least 2/3) if the erasures can be filled in
to ensure that the resulting function has the property and to reject (with probability at least 2/3) if every completion of erasures results in a function that is $\eps$-far from having the property. As observed in~\cite{DRTV18}, the query complexity of problems in this model lies between their complexity in the standard property testing model and the tolerant testing model. Specifically, a (standard) $\eps$-tester that, given a parameter $\eps,$ accepts functions with the property and rejects functions that are $\eps$-far from the property (with probability at least 2/3), is a special case of an $\alpha$-erasure-resilient $\eps$-tester with $\alpha$ set to 0. Importantly for us,
a tolerant tester that, given $\alpha,\eps\in(0,1/2)$ with $\alpha<\eps$, accepts functions that are $\alpha$-close and rejects functions that are $\eps$-far (with probability at least 2/3) can be used to get an $\alpha$-erasure-resilient $\eps$-tester. The erasure-resilient tester can be obtained by simply filling in erasures with arbitrary values and running the tolerant tester.
We prove a lower bound for erasure-resilient monotonicity testing.

Our method yields lower bounds for two other properties of Boolean functions: unateness, a natural generalization of monotonicity, and being a $k$-junta. A Boolean function $f \colon \hypercube \to \{0,1\}$ is {\em unate} if, for every variable $i\in [\n]$, the function is nonincreasing or nondecreasing in that variable.
A function $f\colon \hypercube\to\{0,1\}$ is a {\em $k$-junta} if it depends on at most $k$ (out of $n$) variables.

We prove the following result on erasure-resilient testing which implies Theorem~\ref{thm:mono-approx-lb}.

\begin{theorem}[Erasure-Resilient Lower Bound]\label{thm:mono-ER-lb}
For all constant $\kappa\in(0,1/2),$ there exist $\alpha=O(1/n^{1-\kappa})$ and $\eps=\Omega(1/\sqrt n)$ (that is, with $\frac \eps \alpha=\Omega(n^{1/2-\kappa})$), for which every nonadaptive $\alpha$-erasure-resilient $\eps$-tester requires more than $2^{n^\kappa}$ queries to test monotonicity of  functions $f\colon\hypercube\to\{0,1\}$. The same bound holds for testing unateness and the $n/2$-junta property.
\end{theorem}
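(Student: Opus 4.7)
The plan is to apply Yao's minimax principle: we reduce Theorem~\ref{thm:mono-ER-lb} to exhibiting two distributions $\Dyes$ and $\Dno$ over partially-erased functions $f\colon\hypercube\to\{0,1,\bot\}$, each supported on inputs with at most $\alpha 2^\n$ erasures, such that $\Dyes$ is supported on functions admitting a monotone completion, $\Dno$ is supported on functions whose every completion is $\eps$-far from monotone, and for every fixed nonadaptive query set $Q$ of size $q\leq 2^{n^\kappa}$ the statistical distance between the induced marginals of $(f(x))_{x\in Q}$ under $\Dyes$ and $\Dno$ is less than $1/3$.

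I would build the two distributions to exploit the same matching intuition that drives the algorithm of Theorem~\ref{thm:alg}, but run in reverse. Draw a uniformly random subset $\bS\subseteq[\n]$ of ``hidden directions'' with $|\bS|=s=\Theta(n^\kappa)$, and construct a base function $g_{\bS}$ carrying $\Theta(\eps\cdot 2^\n)$ decreasing edges packed into a matching $\bM_{\bS}$ whose edges all lie along coordinates in $\bS$; the robust isoperimetric inequality of Theorem~\ref{thm:kms} certifies that $g_{\bS}$ remains $\eps$-far from monotone even after $\alpha 2^\n$ arbitrary modifications. The distribution $\Dno$ reveals $g_{\bS}$ outside a jointly-distributed erasure mask $\boldsymbol{E}_{\bS}$ of total size $\alpha 2^\n$ designed to keep both endpoints of most matching edges visible, so that the revealed function itself witnesses the $\eps$-distance to monotonicity. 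The distribution $\Dyes$ reveals a ``monotone sibling'' $h_{\bS}$, obtained from $g_{\bS}$ by fixing the matching violations in a monotone-compatible way, with the same mask $\boldsymbol{E}_{\bS}$; the mask is calibrated to always contain the vertices at which $g_{\bS}$ and $h_{\bS}$ are forced to disagree plus uniformly random padding up to size $\alpha 2^\n$. The feasibility of this padding is precisely the arithmetic $\eps/\alpha=\Omega(\sqrt \n)$ up to polylogarithmic factors, which is where the target gap $n^{1/2-\kappa}$ appears.

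The main obstacle, and the heart of the argument, is proving joint indistinguishability on nonadaptive query sets of size $q=2^{n^\kappa}$. Single-point marginals can be made to agree by the design of $\boldsymbol{E}_{\bS}$, so the only remaining signal is carried by pair correlations: a query pair $(x,y)\in Q\times Q$ with $y=x\oplus e_i$ distinguishes $\Dyes$ from $\Dno$ only when $i\in\bS$, both endpoints escape the mask, and the pair lies inside the planted matching. Because $\bS$ is a uniformly random $s$-subset of $[\n]$, a fixed pair has probability at most $s/\n = n^{\kappa-1}$ of being aligned along a coordinate in $\bS$, and the probability of lying in the matching is $O(\eps)$; a union bound over the $O(q^2)$ ordered pairs gives total variation $O(q^2\cdot\eps\cdot n^{\kappa-1})$, which is $o(1)$ provided $\eps=\poly(1/\n)$ with suitable exponents. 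The extensions to unateness and the $(\n/2)$-junta property reuse the same template: for unateness, replace the planted ``all decreasing'' matching by one whose edges have mismatched orientations along each axis in $\bS$, which constitutes a unateness violation hidden by erasing random endpoints; for the $(\n/2)$-junta property, let $g_{\bS}$ depend on a random $(\n/2{+}1)$-set $\bS$ while $h_{\bS}$ depends on $\bS\setminus\{i^*\}$ for a uniform $i^*\in\bS$, with $\boldsymbol{E}_{\bS}$ designed to hide the slice along $i^*$ on which the two functions differ.
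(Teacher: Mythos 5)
Your high-level template (Yao's principle with a yes/no pair of distributions over partially erased functions) matches the paper, but the construction and, crucially, the indistinguishability argument have genuine gaps that break the proof.

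First, the quantitative heart of the argument fails. You bound the distinguishing probability of a fixed query pair by $O(\eps\cdot n^{\kappa-1})$, i.e., only polynomially small, and then union bound over $O(q^2)$ pairs. With $q=2^{n^\kappa}$ this gives $O\bigl(2^{2n^{\kappa}}\cdot\eps\cdot n^{\kappa-1}\bigr)$, which is astronomically larger than $1$, not $o(1)$; no choice of $\eps=\poly(1/n)$ can rescue it. To rule out $2^{n^\kappa}$ nonadaptive queries you need the per-pair distinguishing probability to be $2^{-\Omega(n^{\kappa})}$. This is exactly what the paper engineers: erasures occupy a band of $\approx 2n^{\kappa}+1$ Hamming-weight layers inside each action subcube, so any pair of queries that could reveal the difference between $\Dyes$ and $\Dno$ must lie in the same action subcube on opposite sides of the band and hence differ in at least $2n^{\kappa}+2$ coordinates; since the control set $\bM$ is a random half of $[\n]$, the probability that all these coordinates avoid $\bM$ (so that the two points share an action subcube) is at most $2^{-2n^{\kappa}-2}$, and only then does the union bound over $\binom{q}{2}$ pairs close. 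A planted set of hidden directions of size $\Theta(n^\kappa)$ hit with probability $n^{\kappa-1}$ per edge cannot produce the required exponentially small collision probability.

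Second, the construction is internally inconsistent. You want the common mask to ``always contain the vertices at which $g_{\bS}$ and $h_{\bS}$ are forced to disagree,'' but if $g_{\bS}$ is $\eps$-far from monotone then every monotone sibling $h_{\bS}$ disagrees with it on at least $\eps\cdot 2^{\n}$ points, while the mask has size only $\alpha\cdot 2^{\n}$ with $\alpha\ll\eps$ (the ratio $\eps/\alpha=\Omega(n^{1/2-\kappa})$ goes the wrong way for you). Conversely, if the visible values of $g_{\bS}$ and $h_{\bS}$ did coincide outside the mask, then the erased $\Dno$ function would admit $h_{\bS}$ as a monotone completion and would not be $\eps$-far in the erasure-resilient sense. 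The impossibility of simply erasing the disagreement set is precisely why the theorem is nontrivial; the paper instead makes $\Dyes$ and $\Dno$ differ on a large, unerased region (the upper halves of action subcubes, flipped according to membership in the random partition set), while arranging that every query pattern which avoids straddling the erased band within a single action subcube sees identical answer distributions. Your sketches for unateness and $n/2$-juntas inherit the same two problems, whereas the paper gets all three properties from one pair of distributions.
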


Theorem~\ref{thm:mono-ER-lb} directly implies lower bounds analogous to the one stated in Theorem~\ref{thm:mono-approx-lb} for unateness and being an $n/2$-junta. Lower bounds for approximating the distance to unateness and to being a $k$-junta have been investigated by Levi and Waingarten~\cite{LW19}. They showed that every algorithm approximating the distance to unateness within a constant factor requires $\widetilde{\Omega}(\n)$ queries and strengthened their lower bound to $\widetilde{\Omega}(\n^{3/2})$ queries for nonadaptive algorithms. They also showed that every nonadaptive algorithm that provides a constant approximation to the distance to being a $k$-junta must make $\widetilde{\Omega}(k^{2})$ queries.
Our lower bounds are exponentially larger than those obtained by Levi and Waingarten~\cite{LW19} and hold for larger approximation factors.

\subsubsection{Separation}
Our lower bounds provide natural properties for which erasure-resilient property testing (and hence, distance approximation) is exponentially harder than standard property testing with nonadaptive algorithms. Previously, such strong separation was only known for artificially constructed properties based on PCPs of proximity~\cite{FF06,DRTV18}. For testing monotonicity of Boolean function, the celebrated nonadaptive algorithm of Khot, Minzer and Safra~\cite{KMS15} makes $\widetilde{O}(\sqrt{\n}/\eps^2)$ queries. Unateness can be tested nonadaptively with
 $O(\frac \n \eps \log \frac \n \eps)$ queries~\cite{BCPRS17} whereas the property of being a $k$-junta can be tested nonadaptively with $\widetilde{O}(k^{3/2}/\eps)$ queries~\cite{Bla08}. Our lower bound shows that, for all three properties, nonadaptive testers requires exponentially many queries when the ratio $\eps/\alpha$ is substantially smaller than $\sqrt{n}.$ This stands in contrast to examples of many properties provided in~\cite{DRTV18}, for which erasure-resilient testers have essentially the same query complexity as standard testers.

\subsubsection{Connection Between Tolerant Testing and Distance Approximation}\label{intro:connection}
Our main algorithm presented in Section~\ref{sec:mono-alg} distinguishes functions that are $\widetilde{O}(\eps/\sqrt{n})$-close to monotone from functions that are $\eps$-far from monotone.
In property testing terminology, such an algorithm is an example of a \emph{tolerant tester}. Transforming a tolerant tester to a distance approximation algorithm can be done using standard techniques; see, for example~\cite[Claim~2]{PRR06} and~\cite[Section~3.3]{ACCL07}.
Note that the distance from any Boolean function to monotonicity is at most $1/2$, since the constant-0 and constant-1 functions are monotone.
Therefore, to obtain an algorithm that approximates the distance to monotonicity for an input function $f$ under the assumption $\eps_f\geq \alpha$, we can run the tolerant tester with $\eps$ set to $\frac{1}{2}, \frac{1}{4}, \frac{1}{8}, \ldots, \alpha$ an appropriate number of times. The guarantees of the resulting conversion from tolerant testing to distance approximation are stated in Theorem~\ref{thm:tolerant-to-dist-approx},
which is a straightforward generalization of~\cite[Claim~2]{PRR06}, with a small improvement in query complexity (by a factor of $\log (1/\alpha)$).
This improvement applies to every tolerant tester with query complexity at least linear in $1/\eps$.
The details appear in Section~\ref{sec:tolerant-to-dist-approx}.

\subsection{Comparison to Potential Alternative Approaches to Approximating \texorpdfstring{$\eps_f$}{epsilon\_f}}
Chakrabarty and Seshadhri, in a personal communication, notified us of an alternative approach towards an $O(\sqrt{n})$-approximation
of $\eps_f$ via estimating the size of a maximal matching of decreasing edges. Results in~\cite{KMS15,CS16} imply that the size of a maximal matching is an $O(\sqrt{n})$-approximation to the distance to monotonicity, and there are sublinear-time algorithms for approximating this quantity~\cite{YYI09,ORRR12}. However, these algorithms are adaptive. It is a compelling open problem to understand whether adaptivity can help with approximating the distance to monotonicity.

Given Theorem~\ref{thm:kms-tal}, a natural approach to design an algorithm for approximating the distance to monotonicity is to estimate the left-hand side of \eqref{eq:kms-tal} by  sampling points from $\hypercube$ uniformly at random. When the underlying function $f\colon \{0,1\}^n\to \{0,1\}$ is $\eps$-far from monotone, the estimator would be $\Omega(\eps)$.
The problem is that this estimator could be just as high for functions that are close to monotone. For example, consider the random function $\boldf \colon \hypercube \to \{0,1\}$ which is defined as follows:
\begin{align*}
\boldf(x) =
\begin{cases}
1 - \maj(x) & \text{with probability } \eps/\n; \\
\maj(x) & \text{with probability } 1 - \eps/\n,
\end{cases}
\end{align*}
where $\maj(\cdot)$ denotes the majority function.
Then, $\eps_{\boldf} = \Theta(\eps / n)$ with high probability, yet the left-hand side of \eqref{eq:kms-tal} is $\Omega(\eps)$ with high probability.

Finally, as thoroughly explained in \cite[Section~3.2]{PRR06}, every $q(n,\eps)$-query algorithm for (standard) property testing whose queries are individually (almost) uniform exhibits some degree of tolerance. The algorithm of Khot et al.~\cite{KMS15} for testing monotonicity of Boolean functions makes $\widetilde{O}(\sqrt{n}/\eps^2)$ queries which are roughly uniformly distributed (this is implicit in Lemma~9.3 of \cite{KMS15}), and hence can be used to obtain an $\widetilde{O}(\sqrt{n}/\alpha)$-approximation. However, notice that the approximation factor degrades as a function of $\alpha$.

\subsection{Previous Work}

Testing monotonicity and unateness (first studied in \cite{GGLRS00}), as well as $k$-juntas (first studied in \cite{FKRSS04}), are among the most widely investigated problems in property testing (\cite{EKKRV00,DGLRRS99,Ras99,LR01,FLNRRS02,AC06, Fis04, HK08, BRW05, PRR06, ACCL07, BGJRW12, BCGM12,BBM12, CS13, CS14, CS16, BlaRY14,CDJS17, CST14, CDST15, KMS15, BB16, CWX17a, PRV17, BCS18, CS19, BCS20} study monotonicity testing, \cite{KS16, BCPRS17, CWX17a, CWX17b, CW19} study unateness testing, and \cite{CG04, Bla08, Bla09, BGSMW13, STW15,ChenSTWX18,Sag18} study $k$-junta testing).
Nearly all the previous work on these properties is in the standard testing model. The best bounds on the query complexity of these problems are an $\widetilde{O}(\sqrt{n})$-query nonadaptive algorithm of \cite{KMS15} and lower bounds of $\widetilde{\Omega}(\sqrt{\n})$ (nonadaptive) and $\widetilde{\Omega}(n^{1/3})$ (adaptive) \cite{CWX17a} for monotonicity, and tight upper and lower bounds of $\widetilde{\Theta}(n^{2/3})$ for unateness testing \cite{CW19, CWX17a}, as well as $\Theta(k \log k)$ for $k$-junta testing \cite{Bla08, Sag18}.

Beyond the (standard) property testing, the questions of erasure-resilient and tolerant testing have also received some attention (\cite{DRTV18,RRV19,LeviPRV21} study the erasure-resilient model, and \cite{GR05,PRR06,FF06,FN07,ACCL07,KS09,MR09,FR10,CGR13,BerRY14, BMR16, Tel16, BCELR16, LW19, DMN19} study the tolerant testing model). Specifically for monotonicity, in~\cite{DRTV18}, an erasure-resilient tester for functions on hypergrids is designed. For the special case of the hypercube domain, it runs in time $O(\n/\eps)$ and works when $\eps/\alpha=\Omega(\n)$. Using the connection between distance approximation and erasure-resilient testing, our approximation algorithm implies an erasure-resilient tester that has a less stringent restriction on $\eps/\alpha$, specifically, $\Omega(\sqrt{\n})$. For approximating the distance to $k$-juntas \cite{BCELR16, DMN19}, the best algorithm with additive error 
of $\eps$ makes $2^k \cdot \poly(k,1/\eps)$ queries~\cite{DMN19}, and the best lower bound was $\Omega(k^2)$ for nonadaptive algorithms~\cite{LW19}.

\section{An Approximation Algorithm for the Distance to Monotonicity}\label{sec:mono-alg}

This section is devoted to proving Theorem~\ref{thm:alg}.
We provide a nonadaptive algorithm that gets a parameter $\alpha > 0$ and oracle access to a function $f \colon \hypercube \to \{0,1\}$ promised to be $\alpha$-far from monotone, makes $\poly(n, 1/\alpha)$ queries, and returns an estimate $\hat{\eps} > 0$ that satisfies, with probability at least $2/3,$
\[ \epsf \leq \hat{\eps} \leq \widetilde{O}(\sqrt{n}) \cdot \epsf .\]

Our main algorithm, \ApproxMono, whose performance is summarized in Lemma~\ref{lem:approx-mono}, distinguishes functions that are close to monotone from those that are far.
Theorem~\ref{thm:alg} follows directly from Lemma~\ref{lem:approx-mono} and Theorem~\ref{thm:tolerant-to-dist-approx}.

\begin{lemma}\label{lem:approx-mono}
There exist a universal constant $c_0 \in (0, 1)$
and a nonadaptive algorithm, \eApproxMono, that gets a parameter $\eps \in (0,1/2)$ and oracle access to a function $f\colon \{0,1\}^n \to \{0,1\}$, makes $\poly(n, 1/\eps)$  queries and outputs \emph{\close}\ or \emph{\far}\ as follows:
\begin{enumerate}
\item If $\epsf \leq \appfactorc$, it outputs \emph{\close}\ with probability at least $2/3$.
\item If $\epsf \geq \eps$, it outputs \emph{\far}\ with probability at least $2/3$.
\end{enumerate}
\end{lemma}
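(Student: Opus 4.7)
The plan is to assemble $\eApproxMono$ from two nonadaptive sub-estimators combined with a threshold test: (i) an estimator $\widehat\rho$ for the normalized decreasing-edge count $\rho_f \eqdef |\{\text{decreasing edges of }f\}|/2^n$, and (ii) an estimator that, for each of many random subsets $\bS \subseteq [n]$ drawn from a suitable distribution, approximates the normalized matching size $\mu_f(\bS) \eqdef |\match{\bS}|/2^n$, where $\match{\bS}$ is a specific matching of decreasing edges along the directions in $\bS$. The algorithm outputs \far\ if $\widehat\rho$ exceeds $\eps\sqrt n$ or the second estimator exceeds $\widetilde{\Omega}(\eps/\sqrt n)$ for some sampled $\bS$, and outputs \close\ otherwise. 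The first estimator is implemented by sampling $\poly(n,1/\eps)$ uniformly random hypercube edges and querying both endpoints; Chernoff produces a sufficiently accurate estimate in $\poly(n,1/\eps)$ queries.

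For (ii), the crucial design property of $\match{S}$ is \emph{local verifiability}: membership of an edge $(x,y)$ in $\match{S}$ is determined by $f$ on a $\poly(n)$-size neighborhood of $\{x,y\}$, so a single membership test uses $\poly(n)$ queries, and averaging the outcomes on many uniformly random $\bS$-directed edges yields an unbiased estimate of $\mu_f(\bS)$ up to a known normalization. For completeness, note that $\match{S}$ is by construction a matching of decreasing edges, so any monotone $g$ must disagree with $f$ on at least one endpoint of each such edge, giving $\mu_f(S)\leq \dist(f,\Mono)$ for every $S$. Combined with the classical bound $\rho_f \leq n\cdot\dist(f,\Mono)$, this shows that when $\dist(f,\Mono)\leq \eps/(\sqrt n\cdot\polylog(n))$ the $\polylog(n)$ slack keeps both estimates below their thresholds with high probability, so $\eApproxMono$ outputs \close. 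For soundness, Lemma~\ref{lem:event-occurs}, which packages the improved KMS robust directed isoperimetric inequality of Theorem~\ref{thm:kms}, will ensure that whenever $\dist(f,\Mono)\geq \eps$ either $\rho_f\geq \eps\sqrt n$, or a random $\bS$ satisfies $\mu_f(\bS)\geq \widetilde{\Omega}(\eps/\sqrt n)$ with noticeable probability; in either case, $\poly(n,1/\eps)$ repetitions push the overall detection probability above $2/3$.

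The main obstacle is the simultaneous design and analysis of $\match{S}$: the local membership rule must (a) produce a genuine matching, so that the basic inequality $\mu_f(S)\leq \dist(f,\Mono)$ is valid, and (b) yield a family dense enough, in expectation over the sampling distribution on $\bS$, for the robust directed isoperimetric inequality to guarantee $\widetilde{\Omega}(\eps/\sqrt n)$-density of $\match{\bS}$ whenever $f$ is $\eps$-far from monotone. Establishing (b) by relating the edge-based quantities in Theorem~\ref{thm:kms} to the locally-verified matching $\match{\bS}$ is the technical heart of the proof. Once $\match{S}$ is fixed and (a), (b) are established, the remaining pieces---Chernoff concentration, threshold selection, and repetition bookkeeping---are routine.
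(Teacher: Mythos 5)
Your plan is essentially the paper's proof of Lemma~\ref{lem:approx-mono}: an edge-violation estimator, plus, for random sets $\bS$, a locally verifiable matching of decreasing edges along $\bS$-directions, with completeness coming from the matching lower bound together with the classical factor-$n$ edge bound, and soundness from Lemma~\ref{lem:event-occurs}. The one substantive gap is that you never construct $\match{S}$: you posit a local membership rule with properties (a) and (b) and declare its design and analysis ``the technical heart,'' left open. But the object is already forced by the lemma you invoke. Lemma~\ref{lem:event-occurs} is stated for the event $\eEvent(x,S,f)$ of Definition~\ref{def:event} (there is $i\in S$ with $(x,x^{(i)})$ decreasing and all other $S$-edges at $x^{(i)}$ nondecreasing), so your local rule must be exactly this event, and your estimator should estimate $\Prx_{\bx}\left[\eEvent(\bx,S,f)=1\right]$ as in Fact~\ref{fact:event-estimation}. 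Property (a) is then Lemma~\ref{lem:dist-lb}: if two distinct captured points shared the matched endpoint $y$, then $y$ would be incident to two decreasing $S$-edges, violating item 2 of the definition; hence the captured edges form a matching and the event probability is at most $2\cdot\dist(f,\Mono)$ (your edge-based variant gives the same conclusion without the factor $2$). Property (b) is literally the statement of Lemma~\ref{lem:event-occurs}. Once $\match{S}$ is instantiated this way, your ``main obstacle'' dissolves, and the remaining argument you sketch is the paper's.

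Two smaller points you gloss over but must make explicit. First, Lemma~\ref{lem:event-occurs} only guarantees that \emph{some} scale $d=2^h$ works, so the algorithm has to enumerate all $h\in\{0,1,\dots,\log_2 n\}$ (or equivalently mix over them), rather than draw $\bS$ from one fixed ``suitable distribution.'' Second, the lemma bounds an expectation over $\bS$, not a probability; since $\Prx_{\bx}\left[\eEvent(\bx,\bS,f)=1\right]\le 1$, an expectation of at least $\mu$ yields, by averaging, probability at least $\mu/2$ over $\bS$ that the value is at least $\mu/2$, and this is what justifies taking $t=\sqrt{n}\cdot\poly(\log n)/\eps$ independent draws of $\bS$ per scale and thresholding the estimate at $\Theta(1/t)$, with the additive accuracy of Facts~\ref{fact:edge-estimation} and~\ref{fact:event-estimation} chosen below the thresholds so that both the \close\ and \far\ cases succeed after a union bound over all subroutine calls.
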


The algorithm \ApproxMono\ is described in Algorithm~\ref{alg:approx-mono}. In the algorithm and its analysis, we use $\af$ to denote $\appfactor$.
The algorithm uses subroutines \EdgeViolations\ and \EventEstimation, which are described in Algorithms~\ref{alg:edge-violations} and~\ref{alg:event-estimation}, respectively. The subroutine \EV\ gets a parameter $\delta > 0$ and oracle access to a  function $f\colon \hypercube \to \{0,1\}$, and returns an estimate of the fraction of decreasing edges of $f$ with an additive error less than~$\delta$.
The second subroutine, \EE, gets a parameter $\delta > 0$, a subset $S \subseteq [n]$ and oracle access to a  function $f\colon \hypercube \to \{0,1\}$. The goal of \EE\ is to estimate the probability\footnote{We use the convention that random variables are boldface whereas fixed quantities use standard typeface. For example, $\bx$ and $\bS$ are random variables whereas $x$ and $S$ are the corresponding fixed quantities. The notation $\sim$ stands for ``sampled from''.}, over $\bx \sim \hypercube$, of an event (which we denote $\Event$ and describe in Definition~\ref{def:event}) defined with respect to {$\bx$}, $S$ and $f$ up with an additive error less than $\delta$. The high level intuition is that, as long as the estimates returned by \EE\ and \EV\ are accurate, we can certify a lower bound on the distance to monotonicity of $f$. We prove that if $f$ is $\eps$-far from monotone, either the number of decreasing  edges of $f$ is large (and Step~\ref{step:far-edge-viol} of Algorithm~\ref{alg:approx-mono} declares \far) or the $\EventEstimation$ subroutine can verify a lower bound on the distance to monotonicity.

\begin{algorithm}
\caption{$\ApproxMono(\eps, f)$} \label{alg:approx-mono}
\SetKwInOut{Input}{input} \SetKwInOut{Output}{output}
\SetKwFor{RepeatTimes}{repeat}{times}{end}

\Input{A parameter $\eps \in (0, 1/2)$; oracle access to $f\colon \hypercube \to \{0,1\}$.}
\Output{Either \close\ or \far.}

\DontPrintSemicolon
\BlankLine

\nl Set $\hat{\bnu} \leftarrow \EdgeViolations(\af/4,f)$. \tcp{$\hat{\bnu}$ is an estimate of the fraction of decreasing edges with an additive error less than $\af/4$.}
\nl \label{step:far-edge-viol}\lIf{$\hat{\bnu} \geq 3\af/4$}{\Return{\emph{\far}}.}
\nl \label{step:set-m}Set $t \leftarrow 2/\mumax$, where $\mumax=c_1\cdot\af/2$ with the constant $c_1$ dictated by Lemma~\ref{lem:event-occurs}. \;
\nl \ForEach{$d \in \{1, 2, 4,\dots, 2^{\floor{\log_2 n}}\}$}{
\nl \RepeatTimes{$t$}{
\nl \label{step:sample-S}Sample $\bS \subseteq [n]$ by including each $i \in [n]$ independently with probability $1/d$. \;
\nl Set $\bmuhat \leftarrow \EventEstimation(\bS, \mumax/4, f)$.
\tcp{ $\bmuhat$ is an estimate of $\mu_f(\bS)=\Prx_{\bx \sim \{0,1\}^n}\left[\Event(\bx,\bS, f)\right]$ with an additive error less than $\mumax/4$.}
\nl \lIf{$\bmuhat \geq 3\mumax/4$}{\Return{\emph{\far}}. \label{step:far-event-est}}
}
}
\nl \Return{\emph{\close}}.

\end{algorithm}

Recall the definition of an edge and a decreasing edge of the hypercube (see Definition~\ref{def:hypercube}).

\begin{definition}\label{def:decreasing-edges}
For a function $f:\{0,1\}^n\to\{0,1\}$, let $\nu_f$ denote the probability that a uniformly random edge is decreasing.
We also refer to $\nu_f$ as the fraction of decreasing edges.
\end{definition}
For a dimension $i \in [\n],$ a point $x \in \hypercube,$ and a bit $b \in \{0,1\}$, we use $x^{(i \to b)}$ to denote the point in $\hypercube$ whose $\ord{i}$ coordinate is $b$ and the remaining coordinates are the same as in $x$. We use $x^{(i)}$ to denote the point $x^{(i \to (1-x_i))}$, where $x_i$ is the $\ord{i}$ coordinate in $x$.

We summarize the properties of the subroutine \EV, which estimates $\nu_f$, in Fact~\ref{fact:edge-estimation}.

\begin{algorithm}
\caption{$\EdgeViolations(\delta, f)$} \label{alg:edge-violations}
\SetKwInOut{Input}{input} \SetKwInOut{Output}{output}
\SetKwFor{RepeatTimes}{repeat}{times}{end}

\Input{A parameter $\delta > 0$; oracle access to $f\colon \hypercube \to \{0,1\}$.}
\Output{A real number $\hat{\bnu} \in [0,1]$ that approximates $\nu_f$.}

\DontPrintSemicolon
\BlankLine

\nl Initialize $\bc \gets 0$ and set $t \gets$ $\ceil{\frac{2}{\delta^2}}$.\; 
\nl \RepeatTimes{$t$}{
	\nl Sample $\bx \sim \{0,1\}^n$ and $\bi \sim [n]$, and query $f(\bx^{(\bi\to 0)})$ and $f(\bx^{(\bi\to 1)})$.\;
	\nl \lIf{$f(\bx^{(\bi\to0)}) > f(\bx^{(\bi\to 1)})$}{$\bc\gets \bc+1$.} 
}
\nl \Return{$\hat{\bnu} = \bc/t$}.

\end{algorithm}

\begin{algorithm}
\caption{$\EventEstimation(S, \delta, f)$} \label{alg:event-estimation}
\SetKwInOut{Input}{input} \SetKwInOut{Output}{output}
\SetKwFor{RepeatTimes}{repeat}{times}{end}

\Input{A set $S \subseteq [n]$; a parameter $\delta > 0$; oracle access to $f\colon \hypercube \to \{0,1\}$.}
\Output{A real number $\bmuhat \in [0,1]$ that approximates $\mu_f(S)$.}

\DontPrintSemicolon
\BlankLine

\nl Initialize $\bc \gets 0$ and set $t \gets \left\lceil\frac{2\ln(\n/\eps)}{\delta^2}\right\rceil$.\;
\nl \RepeatTimes{$t$}{
	\nl Sample $\bx \sim \{0,1\}^n$ and query $f(\bx)$.\;
	\nl \ForEach{$i \in S$}{
		\nl Set $\by_i \gets \bx^{(i)}$ and query $f(\by_i)$.\;
		\nl \lForEach{$j \in S \setminus \{i\}$}{query $f(\by_i^{(j)})$.}
	}
	\nl {\bf if} for some $i \in S$, the edge $\{\bx,\by_i\}$ is decreasing and, for all $j \in S \setminus \{i\}$, the edge $\{\by_i, \by_i^{(j)}\}$ is nondecreasing, {\bf then} $\bc\gets \bc+1$. \tcp{$\Event(\bx, S, f)$ occurred.}
}
\nl \Return{$\bmuhat = {\bc}/{t}$}.
\end{algorithm}

\begin{fact}\label{fact:edge-estimation}
The algorithm $\eEdgeViolations$ is nonadaptive. It gets a parameter $\delta > 0$ and oracle access to a  function $f\colon \hypercube \to \{0,1\}$, makes $O(1/\delta^2)$ queries, and outputs $\hat{\bnu} \in [0,1]$ which, with probability at least $24/25$, satisfies
\begin{align}
|\nu_f - \hat{\bnu} | < \delta. \label{eq:violating-edges-estimator}
\end{align}
\end{fact}

\begin{proof}
By Hoeffding bound, $\Pr[|\nu_f - \hat{\bnu} | \geq \delta]\leq 2 \eee^{-2t\delta^2}\leq 2\eee^{-4}\leq 1/25.$
\end{proof}

\begin{observation}\label{obs:edge-lb}
For all functions $f:\{0,1\}^n\to\{0,1\}$, we have $\nu_f\leq 2\epsf.$
\end{observation}
\begin{proof}
By definition, $\nu_f\cdot 2^{n-1}n$ edges in $f$ are decreasing. To convert $f$ to a monotone function, at least one endpoint of every decreasing edge has to be changed.
Since each point in $\hypercube$ is incident on $\n$ edges of the hypercube, at least $\nu_f\cdot 2^{n-1}$ points have to be changed. Thus, $\epsf\geq\nu_f\cdot 2^{n-1}/2^n=\nu_f/2.$
\end{proof}

\begin{definition}\label{def:event}
For a  function $f\colon \{0,1\}^n \to \{0,1\}$, a subset $S \subseteq [n],$ and a point $x \in \{0,1\}^n$, let $\Event(x, S, f)$ be the following event (see Figure~\ref{fig:capture-event}):
\begin{enumerate}
\item There exists an index $i \in S$ such that $\{ x, y \}$ is a decreasing edge in $f$, where $y = x^{(i)}$.
\item For all $j \in S \setminus \{i\}$, the edge $\{ y, y^{(j)} \}$ is a nondecreasing edge in $f$.
\end{enumerate}
We denote  $\Prx_{\bx \sim \{0,1\}^n}\left[\Event(\bx, S, f) \right]$ by $\mu_f(S)$.
\end{definition}

\begin{figure}[h]
\centering
    \includegraphics[scale=.6]{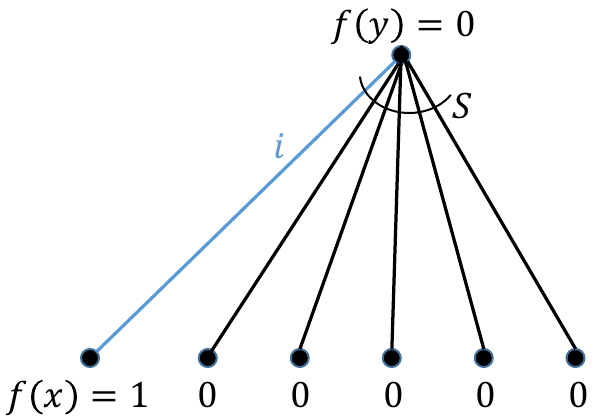}
    \hspace*{1cm}
    \includegraphics[scale=.6]{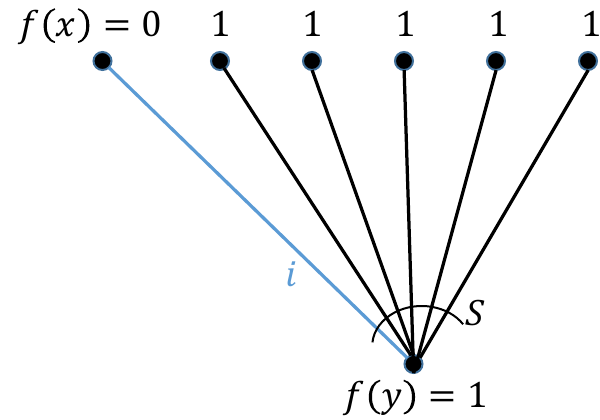}
    \caption{An illustration to Definition~\ref{def:event}. Two cases are depicted, corresponding to the two possible values of $f(x)$.}
    \label{fig:capture-event}
\end{figure}

We summarize the properties of subroutine \EE, which estimates $\mu_f(S)$, in Fact~\ref{fact:event-estimation}. Like Fact~\ref{fact:edge-estimation}, it can be easily proved by an application of Hoeffding bound.

\begin{fact}\label{fact:event-estimation}
The algorithm $\eEventEstimation$ is nonadaptive. It gets a set $S \subseteq [n]$, a parameter $\delta > 0$ and oracle access to a  function $f\colon \hypercube \to \{0,1\}$, makes $O(|S|^2 \log (n/\eps)/ \delta^2)$ queries, and outputs $\bmuhat \in [0,1]$ which, with probability at least $1 - (\eps/n)^3$, satisfies
\begin{align}
|\mu_f(S) - \bmuhat| < \delta. \label{eq:capture-event-estimator}
\end{align}
\end{fact}

Lemma~\ref{lem:approx-mono} follows from Lemmas~\ref{lem:dist-lb} and~\ref{lem:event-occurs}.
\begin{lemma}\label{lem:dist-lb}
For all functions $f\colon \{0,1\}^n \to \{0,1\}$ and sets $S \subseteq [n]$,
\[ \mu_f(S) \leq 2\cdot\epsf. \]
\end{lemma}

\begin{proof}
Let $X = \{ x \in \{0,1\}^n : \text{the event } \Event(x, S, f) \text{ occurs}\}$. For each $x \in X$, let $y_x = x^{(i)}$ for a dimension $i \in S$ be a point for which $\{x, y_x\}$ is decreasing and, for all $j \in S \setminus \{i\}$, the edge $\{y_x, y^{(j)}_x\}$ is nondecreasing. Consider the set of decreasing edges of $f$ given by $E_X = \{ \{x, y_x\} : x \in X\}$. If $x_1, x_2$ from $X$ are distinct, then $y_{x_1} \neq y_{x_2}$\,, because otherwise $y_{x_1}$ would violate Item 2 in Definition~\ref{def:event}. Thus, $E_X$ is a matching. Each edge is added to $E_X$ at most twice (once for each endpoint), so $|E_X| \geq |X|/2$. Since we have a matching of at least $|X|/2$ decreasing edges,  we must change $f$ on at least $|X|/2=\mu_f\cdot 2^n/2$ points to make it monotone.
\end{proof}

\begin{lemma}[Key Lemma]\label{lem:event-occurs}
There exists a constant $c_1 \in (0, 1)$ such that the following holds. Let $f\colon \{0,1\}^n \to \{0,1\}$ be $\eps$-far from monotone, with $\nu_f<\af$.  Let $\mumax_f$ be
the maximum over $d \in \{1,2,4,\dots, 2^{\floor{\log_2 n}}\}$ of
\[\Ex_{\substack{\bS \subseteq [n] \\ i \in \bS \text{ w.p. } 1/d}}\left[\mu_f(\bS)\right].\]
Then $\mumax_f > c_1 \cdot \af$.
\end{lemma}

The proof of Lemma~\ref{lem:event-occurs} appears in Section~\ref{sec:event-occurs}. We use Observation~\ref{obs:edge-lb} and Lemmas~\ref{lem:dist-lb} and~\ref{lem:event-occurs} to complete the proof of Lemma~\ref{lem:approx-mono}.

\begin{proof}[Proof of Lemma~\ref{lem:approx-mono}]
We set the constant $c_0 = c_1 / 8$ in Lemma~\ref{lem:approx-mono}, where the constant $c_1 \in (0,1)$ is 
from Lemma~\ref{lem:event-occurs}.

Let $\CorrectEstimates$ be the event that all invocations of the subroutines in Algorithm~\ref{alg:approx-mono} produce outputs within the error bounds specified in (\ref{eq:violating-edges-estimator}) and (\ref{eq:capture-event-estimator}).
By Facts~\ref{fact:edge-estimation} and~\ref{fact:event-estimation} and by a union bound over the invocation of $\EdgeViolations$ and at most $t (\log_2 n+1) = O(n/\eps)$
invocations of $\EventEstimation$, this event occurs with probability at least $5/6$.

First, we prove part 1 of Lemma~\ref{lem:approx-mono}. Suppose $\epsf \leq c_0\cdot\af$. Then, by Observation~\ref{obs:edge-lb}, $\nu_f\leq 2c_0\cdot\af<\af/2$, since $c_0 \leq 1/8$. By Lemma~\ref{lem:dist-lb} and our choice of $c_0$, for all sets $S\subseteq[n]$, we have
 $\mu_f(S)\leq 2 c_0\cdot\af = c_1 \cdot \af / 4 = \mumax/2,$ where $\mumax$ is the parameter from Step~\ref{step:set-m} of Algorithm~\ref{alg:approx-mono}.
When event $\CorrectEstimates$ occurs, the estimate  $\hat{\bnu}$  in Step~\ref{step:far-edge-viol} is less than  $3\af/4$ and all the estimates $\bmuhat$
in Step~\ref{step:far-event-est} are less than $3\mumax/4$.
In this case, Algorithm~\ref{alg:approx-mono} outputs $\close$. This happens with probability at least 3/4, so the proof of part 1 is complete.

Next, we prove part 2 of Lemma~\ref{lem:approx-mono}.
Suppose that $f \colon \{0,1\}^n \to\{0,1\}$ is $\eps$-far from monotone. If $\nu_f\geq\af$ then, whenever $\CorrectEstimates$ occurs, Step~\ref{step:far-edge-viol} outputs \far. Now, assume $\nu_f<\af$.
By Lemma~\ref{lem:event-occurs}, there exists some $d \in \{1,2,4, \dots, 2^{\floor{\log_2 n}}\}$ for which
$\Ex_{\bS \subseteq [n]}[\mu_f(\bS)]=\mumax_f\geq 2\mumax$,
where $\bS \subseteq [n]$ is sampled by including each $i \in [n]$ independently with probability $1/d$.
By the reverse Markov's inequality, since $\mu_{f}(\bS) \leq 1$ always holds, we get
\[
\Pr_{\bS}\left[\mu_f(\bS)\geq \mumax \right]\geq \frac{\mumax_f-\mumax}{1-\mumax}\geq \mumax.
\]
Step~\ref{step:sample-S} of $\ApproxMono(\eps, f)$ fails to sample some $\bS \subset [n]$ such that $\mu_f(\bS) \geq \mumax$ with probability at most
$(1-m)^t\leq \eee^{-mt}
\leq \eee^{-2}\leq 1/6.
$
Since $\CorrectEstimates$ occurs with probability at least $5/6$, Step~\ref{step:far-event-est} outputs \far\ with probability at least $5/6 - 1/6=2/3$.
\end{proof}

\subsection{Proof of Lemma~\ref{lem:event-occurs}}\label{sec:event-occurs}
We start by outlining some of the ideas from our proof of Lemma~\ref{lem:event-occurs} at a high level. In the proof, we attribute each decreasing edge to its endpoint adjacent to a larger number of decreasing edges. We partition all the hypercube vertices into $\log^2 n$ buckets $B_{d,s}$, where the bucket $B_{d,s}$ contains each vertex with $d$ to $2d$ adjacent decreasing edges and $s$ to $2s$ adjacent decreasing edges attributed to it. Importantly, we show that, for each vertex $x\in B_{d,s}$,   when each coordinate of $[n]$ is included in $\bS$ independently with probability $1/d$, the probability of the event $\EventArgs$ is $\Omega(s/d)$. Then we apply (a variant of) the Cauchy-Schwartz inequality, the directed (robust) isoperimetric inequality, and the upper bound on the fraction $\nu_f$ of decreasing edges (from the premise of Lemma~\ref{lem:event-occurs}) to get a lower bound on
$$\sum_{d\in\{1,2,4,\dots,2^{\floor{\log_2 n}}\}}\Ex_{\substack{\bS \subseteq [n] \\ i \in \bS \text{ w.p.}\; 1/d}}
[\mu_f(\bS)].$$
We conclude that, by averaging, there exists a setting of $d$ for which
one of the terms in the sum is large.

A crucial tool
in our proof of Lemma~\ref{lem:event-occurs}
is the main (robust) directed isoperimetric inequality of Khot et al.~\cite{KMS15}. We use notation consistent with~\cite{KMS15}. For a function $f \colon \{0,1\}^n \to \{0,1\}$, let $\calS_f^-$ denote the set of decreasing edges of $f$. Let the function $\uinf \colon \hypercube \to \{0, 1, \dots, n\}$ map  each point $x \in \{0,1\}^n$ to the number of decreasing edges\footnote{Note that $\uinf$ is the ``undirected'' version of $I_f^-$. Specifically,
\begin{align*}
I_f^-(x) = \begin{cases}
\uinf(x) & \text{ if } f(x)=1, \\
0 & \text{ if } f(x) = 0.
\end{cases}
\end{align*}
} of $f$ incident on $x$.
 For an arbitrary coloring of $\calS_f^-$ into red and blue edges, $\col \colon \calS_f^- \to \{ \red, \blue\}$, let
 $I_{f,\col}^-  \colon \{0,1\}^n \to \{ 0, \dots, n \}$ be the function given by:
\[
I_{f, \col}^-(x) =
	\begin{cases}
		| \{ \{x, y\} \in \calS_f^- : \col(\{x, y\}) = \red \}| & \text{if } f(x) = 1;\\
		| \{ \{x, y\}  \in \calS_f^- : \col(\{x, y\}) = \blue\}| & \text{if } f(x) = 0.
	\end{cases}
\]
That is, each decreasing edge is counted towards the lower endpoint if it is red and towards the higher endpoint if it is blue.

We crucially rely on the main theorem of \cite{KMS15}, which is stated next, with a minor improvement in the bound. We obtain the improvement in Section~\ref{sec:improving-thm1.9}.
\begin{theorem}[Improvement of Theorem 1.9 from \cite{KMS15}]\label{thm:kms}
Let $f\colon \{0,1\}^n \to \{0,1\}$ be $\eps$-far from monotone. Then, for every coloring $\col$ of $\calS_f^-$ into $\red$ and $\blue$,
\begin{align}
\Ex_{\bx \sim \{0,1\}^n}\left[ \sqrt{I_{f, \col}^-(\bx)}\right]  = \Omega\left(\eps\right). \label{eq:kms}
\end{align}
\end{theorem}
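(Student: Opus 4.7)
The plan is to follow the proof template of Khot--Minzer--Safra~\cite{KMS15} for their directed isoperimetric inequality, suitably adapted to the two-coloring setting. Given $f$ that is $\eps$-far from monotone together with a coloring $\col\colon\calS_f^-\to\{\red,\blue\}$, the strategy proceeds in three stages: (i) construct a nearby monotone function $g$ via a rounding procedure guided by the coloring, (ii) bound $\dist(f,g)$ by a weighted sum of the colored violation counts, and (iii) invoke a directed Talagrand-style inequality to convert that sum into the expected square-root quantities on the left-hand side of~\eqref{eq:kms}.

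For the rounding, I would blame each red decreasing edge $(x,y)$ (with $f(x)=1$, $f(y)=0$) on its $1$-endpoint $x$ and each blue decreasing edge on its $0$-endpoint $y$. The function $g$ is obtained by flipping the value of $f$ at every point that accumulates sufficiently many blames, with the threshold chosen so that the result is monotone. The count of flipped $1$-valued points is then controlled by the distribution of $I_{f,\red}^-$ on $\{x : f(x) = 1\}$, and analogously for $0$-valued points with $I_{f,\blue}^-$; combining these with $\dist(f,\Mono)\leq\dist(f,g)$ yields an upper bound on $\eps$ expressed in terms of expectations involving $I_{f,\red}^-$ and $I_{f,\blue}^-$ over $\hypercube$.

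The technical heart of the argument is a directed Talagrand-style inequality applied separately to the red contribution (supported on the $1$-valued points of $f$) and the blue contribution (supported on the $0$-valued points). This inequality converts a layer-cake decomposition of the expectations above into an expectation of the square root of the violation count, which is exactly the quantity appearing in~\eqref{eq:kms}. Because the two terms in~\eqref{eq:kms} are added rather than coupled, the red and blue summands can be handled independently once the rounding has been designed with color-symmetric thresholds.

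The main obstacle, and the source of the ``minor improvement'' over the original Theorem~1.9 of~\cite{KMS15}, is avoiding the polylogarithmic loss that arises from a naive layer-cake argument. I would address this by replacing a single global Cauchy--Schwarz step in the KMS proof with a finer dyadic-level analysis, applying the sharper directed inequality of Theorem~\ref{thm:kms-tal} on each dyadic level of $\max(I_{f,\red}^-, I_{f,\blue}^-)$ before summing the contributions. The delicate point is that this level-wise inequality must hold uniformly over arbitrary red/blue splits of $\calS_f^-$, so we must argue about colorings chosen adversarially rather than restricting to the canonical choice implicit in the original KMS argument.
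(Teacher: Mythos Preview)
Your proposal takes a route quite different from the paper's, and it has genuine gaps.

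The paper does not introduce any new rounding construction or layer-cake analysis; it makes a one-line modification inside the existing KMS random-restriction proof. Recall that KMS bound $\Psi_f(p)-\Psi_f(p/2)$ (where $\Psi_f(p)$ is the expected distance after restricting each coordinate with probability $p$ and then sorting along the surviving directions) by the colored Talagrand objective of the \emph{restricted} function, and then crudely bounded the restricted objective by the unrestricted one. The paper's improvement is simply to observe, via Jensen's inequality for $\sqrt{\cdot}$, that
\[
\Ex_{\bS\sim\calS(p)}\Big[\sqrt{|\viol{x}\cap\bS|}\Big]\le\sqrt{\Ex_{\bS}\big[|\viol{x}\cap\bS|\big]}=\sqrt{p}\cdot\sqrt{I_f^-(x)},
\]
so the restricted colored objective is at most $\sqrt{p}$ times the unrestricted one (Proposition~\ref{prop:improv-tal-obj}). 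Telescoping $\eps\le\Psi_f(1)-\Psi_f(0)=\sum_{i\ge 0}\big(\Psi_f(2^{-i})-\Psi_f(2^{-i-1})\big)$ then yields a convergent geometric series $\sum_i 2^{-i/2}=O(1)$ in place of the $O(\log n)$ equal terms in the original argument. That is the entire proof; Theorems~\ref{thm:kms} and~\ref{thm:kms-tal} are established simultaneously by the same calculation.

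Your plan, by contrast, rests on two unjustified steps. First, the claim that flipping every point with ``sufficiently many blames'' produces a monotone function is not at all clear: flipping a $1$-point $x$ with large $I_{f,\red}^-(x)$ can create new decreasing edges to $1$-valued neighbors of $x$, and there is no evident choice of threshold that terminates this cascade while keeping the number of flips under control. Second, your fix for the logarithmic loss---applying Theorem~\ref{thm:kms-tal} on dyadic levels of $\max(I_{f,\red}^-,I_{f,\blue}^-)$---is ill-posed. Theorem~\ref{thm:kms-tal} lower-bounds $\Ex[\sqrt{I_f^-}]$ for a function that is $\eps$-far from monotone; it says nothing about the contribution from a single dyadic level, because restricting attention to points in one level does not give you a Boolean function with a known distance to monotonicity. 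And since the paper proves Theorems~\ref{thm:kms} and~\ref{thm:kms-tal} by the same argument, invoking one as a black box inside the proof of the other is circular rather than a genuine reduction.
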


To prove Lemma~\ref{lem:event-occurs}, consider a function $f\colon \{0,1\}^n \to \{0,1\}$ which is $\eps$-far from monotone with
$\nu_f<\af$.
Consider the coloring of $\calS_f^-$ that colors each edge $\{x,y\}$ with $x\prec y$ as follows:
\[ \col(\{x, y\}) = \left\{ \begin{array}{cc} \red & \text{ if  } \uinf(x) \geq \uinf(y); \\
						\blue & \text{ if  } \uinf(x) < \uinf(y).
\end{array} \right. \]
In this coloring, each decreasing edge in $f$ is counted in \eqref{eq:kms} towards its endpoint incident on a higher number of decreasing edges. (If there is a tie, it is counted towards the lower endpoint.)

Without loss of generality,
suppose that the red edges contribute at least as much as the blue edges to the Talagrand objective (the case where the blue edges contribute more is symmetric). In other words, we break up the left-hand side of (\ref{eq:kms}) into two terms and assume one is greater than or equal to the other:
\[
\sum_{x:f(x)=1}\left[ \sqrt{I_{f, \col}^-(x)}\right]\geq \sum_{x:f(x)=0}\left[ \sqrt{I_{f, \col}^-(x)}\right].
\]
Then, by (\ref{eq:kms}),
\[
\frac 1{2^n}\sum_{x:f(x)=1}\left[ \sqrt{I_{f, \col}^-(x)}\right]
\geq\frac 1 2\Ex_{\bx \sim \{0,1\}^n}\left[ \sqrt{I_{f, \col}^-(\bx)}\right]  = \Omega\left(\eps\right).
\]
We partition the points $x\in \{0,1\}^n$ with $f(x)=1$ and $I_{f, \col}^-(x)>0$ into buckets $B_{d,s}$ indexed by $d,s\in \{1,2,4,\dots, 2^{\floor{\log_2 n}}\}$, where $d\geq s$. The bucket $B_{d,s}$ is defined by:
\[
B_{d,s}=\{ x \in \{0,1\}^n : d \leq \uinf(x) < 2d \text{ and } s \leq I_{f,\col}^-(x) < 2s \text{ and } f(x) = 1\}.
\]
That is, vertices in $B_{d,s}$ are incident on between $d$ and $2d$ decreasing edges and  between $s$ and $2s$ red edges.
By the definition of the buckets, $I_{f,\col}^-(x)\leq 2s$ for every $x\in B_{d, s}$, implying
\begin{equation}\label{eq:sum-of-sq-roots-lb}
\sum_{d,s:d\geq s}|B_{d,s}|\cdot\sqrt{s}
= \sum_{d,s: d\geq s}\ \ \sum_{x \in B_{d, s}} \sqrt{s}
\geq\frac 1{\sqrt{2}} \sum_{x:f(x)=1} \sqrt{I_{f,\col}^-(x)}
=\Omega(\eps\cdot 2^n).
\end{equation}
Each $x \in B_{d, s}$ is an endpoint of at least $d$ decreasing edges of $f$. Moreover, the sets of decreasing edges incident on different points $x$ with $f(x)=1$  are disjoint. Consequently, by the bound on the fraction of decreasing edges in the statement of Lemma~\ref{lem:event-occurs},
\begin{equation}\label{eq:sum-of-d-ub}
\sum_{d,s:d\geq s}|B_{d,s}|\cdot d<
\af\cdot 2^{n-1}n
=\eps\sqrt{n/\log n}\cdot 2^{n-1}.
\end{equation}

Next, we show that for each bucket $B_{d,s}$ and each $x \in B_{d,s}$, the probability that the event $\EventArgs$ occurs is sufficiently large when $\bS$ is chosen appropriately.

\begin{claim} \label{clm:good-set-event}
For all $d,s\in \{1,2,4,\dots, 2^{\floor{\log_2 n}}\}$, where $d\geq s,$ and all $x \in B_{d,s}$,
\[
\Prx_{\substack{\bS \subseteq [n] \\ i \in \bS \text{ w.p.}\; 1/d}}\left[ \eEventArgs \right] \geq\frac{1}{\eee^4} \cdot \frac{s}{d}.
\]
\end{claim}

\begin{proof}
Fix $d\geq s$ and an arbitrary $x \in B_{d,s}\,$.

First, consider the case when $d = 1$. Then $s = 1$ and, consequently,
\[\uinf(x) = I_{f, \col}^- (x) = 1,\]
that is, the only decreasing edge incident on $x$ is colored $\red$. Call this edge $\{x,y\}.$
Since $\col(x,y)=\red,$ by definition of the coloring, $\uinf(y)\leq \uinf(x)=1.$
Therefore, $x$ and $y$ are not endpoints of any decreasing edges other than the edge $\{x,y\}$.
Note that $\bS = [\n],$ since each $i \in [\n]$ is in $\bS$ with probability $1/d = 1$. By Definition~\ref{def:event}, $\EventArgs$ occurs since $\{x,y\}$ is a decreasing edge along a dimension in $\bS$, and all other edges incident on $y$ are nondecreasing. Hence,
\[
\Pr_{\bS = [\n]}[\EventArgs] = 1,
\]
concluding the proof for the case $d = 1$.

Now, consider the case when $d \geq 2$.  For $x \in \hypercube$, let $\viol{x} = \{i \in [n] : \{x, x^{(i)}\} \in \calS_f^- \}$ denote the set of dimensions along which the 
edges incident on $x$ are decreasing in $f$, and let $\goodviol{x} = \{i \in \viol{x} : \uinf(x) \geq \uinf(x^{(i)})\}$ 
be the set of dimensions along which the other endpoint is adjacent to no more decreasing edges than $x$.
By the definition of the buckets, $|\viol{x}| \leq 2d-1$ and $s\leq |\goodviol{x}| \leq 2s-1$.
If we sample $\bS \subseteq [n]$ by including each index $i \in [n]$ independently with probability $1/d$, then
the probability of $\EventArgs$ is at least the probability that there exists a unique $i \in \bS$ such that $y = x^{(i)}$ satisfies $\{x, y\} \in \calS_f^-$ with $\uinf(x) \geq \uinf(y)$, and all other decreasing edges of $f$ incident on $y$
are along dimensions in $[\n] \setminus \bS$.
This probability, in turn, is at least the probability of the union over $i\in\goodviol{x}$ of the following events: for each  $i\in\goodviol{x}$, the corresponding event is that $i\in \bS$, but all other dimensions in $\goodviol{x} \cup \viol{x^{(i)}}$ are not in $\bS$.
Since these events are disjoint,
\begin{align*}
\Pr_{\bS \subseteq [n]}\left[ \EventArgs \right]
&\geq \sum_{i \in \goodviol{x}} \Big( \Pr[i \in \bS] \cdot
	\prod_{
		j \in (\goodviol{x}\cup \viol{x^{(i)}}) \setminus \{i\}}
		\Pr[j \notin \bS] \Big),\\
& \geq s \cdot \frac{1}{d} \cdot \left(1 - \frac{1}{d} \right)^{(2s-2)+(2d-2)}
 \geq s \cdot \frac{1}{d} \cdot \left(1 - \frac{1}{d} \right)^{4(d-1)}
 \geq \frac{s}{\eee^4d},
\end{align*}
where we used $s\leq |\goodviol{x}| \leq 2s-1$ and $$|\viol{x^{(i)}}| \leq \uinf(x) = |\viol{x}| \leq 2d-1$$ to get the second inequality, $d\geq s$ to get the third inequality, and $(1-1/d)^{d-1} \geq 1/\eee$ for all $d \geq 2$ to get the final inequality.
This concludes the proof of Claim~\ref{clm:good-set-event}.
\end{proof}
By Claim~\ref{clm:good-set-event}, for all $d\in \{1,2,4,\dots, 2^{\floor{\log_2 n}}\},$
\begin{align}
\Ex_{\substack{\bS \subseteq [n] \\ i \in \bS \text{ w.p. } 1/d}}\left[\mu_f(\bS)\right]
&=\Ex_{\substack{\bS \subseteq [n] \\ i \in \bS \text{ w.p.}\; 1/d}}\Big[ \Prx_{\bx \sim \{0,1\}^n}\left[ \Event(\bx, \bS, f)\right]\Big]\nonumber\\
&\geq \frac{1}{2^n} \sum_{x \in \bigcup_s  B_{d,s}} \ \ \Prx_{\substack{\bS \subseteq [n] \\ i \in \bS \text{ w.p.}\; 1/d}}\left[ \EventArgs\right] \nonumber\\
&= \frac{1}{2^n} \sum_{s\in\{1,2,4,\dots,d\}}|B_{d,s}| \cdot \frac{s}{\eee^4d}\;. \label{eq:one-d}
\end{align}
To conclude the proof of Lemma~\ref{lem:event-occurs},
we apply Titu's Lemma, which
states that for all positive real numbers $a_1,\dots,a_k$ and $b_1,\dots,b_k \,,$
\[
\frac{(\sum_{i=1}^k a_i)^2}{\sum_{i=1}^kb_i}\leq \sum_{i=1}^{k}\frac{a_i^2}{b_i}\,.
\]
Titu's lemma follows directly from the Cauchy-Schwartz inequality:
\[
\left(\sum_{i=1}^k a_i\right)^2 = \left(\sum_{i=1}^k \frac{a_i}{\sqrt{b_i}} \cdot \sqrt{b_i}\right)^2 \leq \left(\sum_{i=1}^k \frac{a_i^2}{b_i}\right) \left(\sum_{i=1}^k b_i\right)\,.
\]
By (\ref{eq:one-d}) and Titu's Lemma,
\begin{align*}
\sum_{d\in\{1,2,4,\dots,2^{\floor{\log_2 n}}\}}\Ex_{\substack{\bS \subseteq [n] \\ i \in \bS \text{ w.p.}\; 1/d}}
[
\mu_f(\bS)
]
&\geq \frac{1}{2^n} \sum_{d,s: d\geq s}  |B_{d,s}| \cdot \frac{s}{\eee^4d}\\
= \frac 1{\eee^4}\cdot \frac{1}{2^n} \sum_{\substack{d,s: d\geq s \\  |B_{d,s}| \neq 0}}  \frac{(|B_{d,s}|\sqrt{s})^2}{|B_{d,s}|d}
&\geq \frac 1{\eee^4}\cdot \frac{1}{2^n}   \frac{(\sum_{d,s: d\geq s}|B_{d,s}|\sqrt{s})^2}{\sum_{d,s: d\geq s}|B_{d,s}|d}\\
= \frac{1}{2^n} \cdot  \frac{\Omega((\eps\cdot 2^n)^2)}{\eps\sqrt{n/\log n}\cdot 2^{n-1}}
&=\Omega\left(\frac{\eps\sqrt{\log n}}{\sqrt{n}}\right),
\end{align*}
where we used \eqref{eq:sum-of-sq-roots-lb} and \eqref{eq:sum-of-d-ub} to get the final line. By an averaging argument, since the summation on the left-hand side above has $O(\log n)$ terms, at least one of the terms is $\Omega(\eps/(\sqrt{n\log n}))$, completing the proof of Lemma~\ref{lem:event-occurs}.

\section{Improvements to the Isoperimetric Inequalities from\texorpdfstring{~\cite{KMS15}}{~Khot et al. (2018)}}\label{sec:improving-thm1.9}
In this section, we prove Theorems~\ref{thm:kms} and~\ref{thm:kms-tal}.

First, we set up some notation.
For a function $f\colon\hypercube\to\{0,1\}$, a set $S\subseteq[n]$, and a string $z\in\{0,1\}^{\overline{S}}$, let $f(\cdot, z) \colon \{0,1\}^{S} \to \{0,1\}$ denote the function $f$ restricted to the subcube $\{0,1\}^{S}$ and obtained from $f$ by setting the input bits  in $\{0,1\}^{\overline{S}}$ to $z$.

In Proposition~\ref{prop:kms-bound} used in the proof of~\cite[Theorem~1.6]{KMS15} and stated below, we fix a subset $S \subseteq [n]$ and sample a uniformly random $\bz \sim \{0,1\}^{\overline{S}}$. Then we consider $f(\cdot, \bz)$, a random restriction of $f$.
\begin{proposition} \label{prop:kms-bound}
For a function $f\colon\{0,1\}^n \to \{0,1\}$ and a
set $S \subseteq [n]$,
\[ \Ex_{\bx \sim \{0,1\}^n}\left[ \sqrt{I_{f}^-(\bx)}\right]
\geq \Ex_{\bz \sim \{0,1\}^{\overline{S}}}\left[
\Ex_{\bw\sim\{0,1\}^{S}} \left[\sqrt{I_{f(\cdot, \bz)}^-(\bw)}\right] \right]. \]
\end{proposition}
Intuitively, for the restricted function $f(\cdot, \bz)$, we count the decreasing edges \emph{only along dimensions in} $S$.
To strengthen Proposition~\ref{prop:kms-bound}, instead of fixing $S$, we sample $S$ according to the following distribution: For a real number $p \in [0,1]$, let $\calS(p)$ denote the distribution on subsets $S \subseteq [\n]$, where each $i \in [\n]$ is included in $S$ independently with probability $p$.
We improve Proposition~\ref{prop:kms-bound} to the following.
\begin{proposition} \label{prop:stronger-bound}
For a function $f\colon\{0,1\}^n \to \{0,1\}$ and a parameter $p \in [0, 1]$,
\[
\sqrt{p} \cdot \Ex_{\bx \sim \{0,1\}^n} \left[ \sqrt{I_f^-(\bx)}\right]
\geq \Ex_{\substack{\bS \sim \calS(p) \\ \bz \sim \{0,1\}^{\overline{\bS}}}}\left[ \Ex_{\bw\sim\{0,1\}^{\bS}} \left[\sqrt{I_{f(\cdot, \bz)}^-(\bw)}\right] \right].
\]
\end{proposition}

\begin{proof}
Recall that for $x \in \hypercube$, the set $\viol{x}$ denotes the subset of dimensions along which the edges incident on $x$ are decreasing in $f$.
Let
\begin{align*}
\dviol{x} = \begin{cases}
\viol{x} & \text{ if } f(x)=1, \\
0 & \text{ if } f(x) = 0.
\end{cases}
\end{align*}
Note that $|\dviol{x}| = I_f^-(x)$ for all $x \in \hypercube$. Hence, 
\begin{align*}
\Ex_{\substack{\bS \sim \calS(p) \\ \bz \sim \{0,1\}^{\overline{\bS}}}} \left[ \Ex_{\bw \sim \{0,1\}^{\bS}} \left[\sqrt{I_{f(\cdot, \bz)}^-(\bw)} \right] \right]
& = \Ex_{\substack{\bS \sim \calS(p) \\ \bx \sim \hypercube}} \left[ \sqrt{|\dviol{\bx} \cap \bS|}\right] \\
& \leq \Ex_{\bx \sim \hypercube} \left[ \sqrt{\Ex_{\bS \sim \calS(p)}\left[|\dviol{x} \cap \bS| \right]} \right] \\
& = \Ex_{\bx \sim \hypercube} \left[ \sqrt{I_f^-(x) \cdot p} \right] \\
& = \sqrt{p} \cdot \Ex_{\bx \sim \{0,1\}^n}\left[ \sqrt{I_f^-(\bx)}\right],
\end{align*}
where we used Jensen's inequality and the fact that the function $\phi(t) = \sqrt{t}$ is concave to derive the inequality. 
\end{proof}

Similarly, we have the analogous proposition for the proof of the robust version of the Talagrand objective (Theorem~1.9 of~\cite{KMS15}).

\begin{proposition} \label{prop:improv-tal-obj}
For a function $f\colon\{0,1\}^n \to \{0,1\}$, a coloring $\col \colon \calS_f^- \to \{ \red, \blue\}$, and a parameter $p \in [0, 1]$,
\begin{align*}
\sqrt{p} \cdot \Ex_{\bx \sim \{0,1\}^n}  \Big[ \sqrt{I_{f,\col}^-(\bx)}\Big]
\geq \Ex_{\substack{\bS \sim \calS(p) \\ \bz \sim \{0,1\}^{\overline{\bS}}}} \Bigg[ \Ex_{\bw\sim\{0,1\}^{\bS}} \Big[\sqrt{I_{f(\cdot, \bz), \col}^-(\bw)}\Big] \Bigg].
\end{align*}
\end{proposition}

Khot et al.~\cite{KMS15} established a connection between the Talagrand objective and the function obtained after applying the \emph{switch} operator, which was defined in~\cite{DGLRRS99, GGLRS00} (see also~\cite[Definition~4]{Ras99}).
\begin{definition}[Switch operator]
The \emph{switch} operator with a parameter $i\in[n]$ applied to a function $f:\hypercube \to \{0,1\}$ returns a function $\switch_i [f] \colon \hypercube \to \{0, 1\}$ defined as follows:
\begin{align*}
\switch_i[f](x) =
	\begin{cases}
		\min(f(x^{(i \to 0)}), f(x^{(i \to 1)})) & \text{ if } x = x^{(i \to 0)}; \\
		\max(f(x^{(i \to 0)}), f(x^{(i \to 1)})) & \text{ if } x = x^{(i \to 1)}.
	\end{cases}
\end{align*}
The definition of the switch operator $\switch$ can be extended to an ordered set $T = (i_1, \ldots, i_\ell)$, where $i_1, \ldots, i_\ell \in [n]$, by applying the switch operator along the dimensions in $T$ in order:
\[
\switch_T [f] = \switch_{i_\ell}[\switch_{i_{\ell-1}}[\cdots \switch_{i_2}[\switch_{i_1}[f]] \cdots]].
\]
\end{definition}

Recall from Definition~\ref{def:hypercube} that, for $i \in [\n]$, an $i$-edge is a hypercube edge whose endpoints differ in the $\ord{i}$ coordinate.
Intuitively, the switch operator ``repairs'' each decreasing $i$-edge by switching the function values on its endpoints. Note that the switch operator has no effect on all nondecreasing $i$-edges.
Dodis et al.~\cite{DGLRRS99} proved that,
for every permutation $\rho$ of $[\n]$, the function $\switch_\rho[f]$ is monotone. Consequently, $\dist(f, \switch_\rho[f]) \geq \epsf$. Fattal and Ron~\cite{FR10} observed that $\dist(f, \switch_\rho[f]) \leq 2 \cdot \epsf$. (See \cite[Section~3.1]{KMS15} for a discussion of this.)

For an ordered set $T$ and a vector $\pi \in \{Y,N\}^{|T|}$, let $T \star \pi$ denote the ordered set consisting of only the elements in $T$ whose corresponding position in $\pi$ is $Y$. For example, if $T = (5, 1, 7, 4, 2, 9)$ and $\pi = (Y, Y, N, Y, N, N)$, then $T \star \pi = (5, 1, 4)$. For any finite set $S$, let $\mathcal{P}(S)$ denote the uniform distribution supported on the set of all permutations of $S$.

The following theorem is implicit in~\cite[Section~4]{KMS15}.
\begin{theorem}\label{thm:kms-iso}
For a function $f \colon \hypercube \to \{0,1\}$, there exists a constant $C > 0$ such that
\[
C \cdot \Ex_{\bx \sim \{0,1\}^\n} \left[ \sqrt{I_f^-(\bx)} \right] \geq
\Ex_{\boldsymbol{\lambda} \sim \mathcal{P}([\n])} \left[ \dist(f, \switch_{\boldsymbol{\lambda}} [f]) \right] - \Ex_{\substack{\boldsymbol{\rho} \sim \mathcal{P}([\n]) \\ \boldsymbol{\pi} \sim \{Y,N\}^{\n} }} \left[ \dist(f, \switch_{\boldsymbol{\rho} \star \boldsymbol{\pi}}[f]) \right].
\]
\end{theorem}

Consider the quantity
$\Ex_{\bw\sim\{0,1\}^{\bS}} \left[\sqrt{I_{f(\cdot, \bz)}^-(\bw)}\right]$
in the statement of Proposition~\ref{prop:stronger-bound}. From Theorem~\ref{thm:kms-iso}, we get
\[
C \cdot \Ex_{\bw \sim\{0,1\}^{\bS}} \left[\sqrt{I_{f(\cdot, \bz)}^-(\bw)}\right]
\geq
\Ex_{\boldsymbol{\lambda} \sim \mathcal{P}(\bS)} \left[ \dist(f(\cdot, \bz), \switch_{\boldsymbol{\lambda}} [f(\cdot, \bz)]) \right] \; - \! \Ex_{\substack{\boldsymbol{\rho} \sim \mathcal{P}(\bS) \\ \boldsymbol{\pi} \sim \{Y,N\}^{|\bS|} }} \left[ \dist(f(\cdot, \bz), \switch_{\boldsymbol{\rho} \star \boldsymbol{\pi}}[f(\cdot, \bz)]) \right].
\]
Taking the expectation over $\bS \sim \calS(p)$, we get
\begin{multline} \label{eq:exp-dist}
C \cdot \Ex_{\substack{\bS \sim \calS(p) \\ \bz \sim \{0,1\}^{\overline{\bS}}}} \left[ \Ex_{\bw \sim \{0,1\}^{\bS}} \left[\sqrt{I_{f(\cdot, \bz)}^-(\bw)} \right] \right]
\geq  \Ex_{\substack{\bS \sim \calS(p) \\ \bz \sim \{0,1\}^{\overline{\bS}}}} \left[ \Ex_{\boldsymbol{\lambda} \sim \mathcal{P}(\bS)} \left[ \dist(f(\cdot, \bz), \switch_{\boldsymbol{\lambda}} [f(\cdot, \bz)]) \right] \right] \\
 -
\Ex_{\substack{\bS \sim \calS(p) \\ \bz \sim \{0,1\}^{\overline{\bS}}}} \bigg[ \Ex_{\substack{\boldsymbol{\rho} \sim \mathcal{P}(\bS) \\ \boldsymbol{\pi} \sim \{Y,N\}^{|\bS|} }} \left[ \dist(f(\cdot, \bz), \switch_{\boldsymbol{\rho} \star \boldsymbol{\pi}}[f(\cdot, \bz)]) \right] \bigg].
\end{multline}
For ease of notation, let $\Psi_f(p)$ denote the first term on the right-hand side of (\ref{eq:exp-dist}):
\[
\Psi_f(p) = \Ex_{\substack{\bS \sim \calS(p) \\ \bz \sim \{0,1\}^{\overline{\bS}}}} \left[ \Ex_{\boldsymbol{\lambda} \sim \mathcal{P}(\bS)} \left[ \dist(f(\cdot, \bz), \switch_{\boldsymbol{\lambda}} [f(\cdot, \bz)]) \right] \right].
\]
We can also express the second term on the right-hand side of (\ref{eq:exp-dist}) using this notation:
\[
\Ex_{\substack{\bS \sim \calS(p) \\ \bz \sim \{0,1\}^{\overline{\bS}}}} \bigg[ \Ex_{\substack{\boldsymbol{\rho} \sim \mathcal{P}(\bS) \\ \boldsymbol{\pi} \sim \{Y,N\}^{|\bS|} }} \left[ \dist(f(\cdot, \bz), \switch_{\boldsymbol{\rho} \star \boldsymbol{\pi}}[f(\cdot, \bz)]) \right] \bigg] = \Psi_f(p/2),
\]
since the probability that an arbitrary $i \in [\n]$ appears in $\rho \star \pi$ is equal to the product of the probability that $i \in \bS$ and the probability that the position where $i$ appears in $\pi$ has $Y$. By \eqref{eq:exp-dist} and Proposition~\ref{prop:stronger-bound}, for every $p \in [0, 1]$, we get
\begin{align}
\Psi_f(p) - \Psi_f(p/2)
&\leq C \cdot \Ex_{\substack{\bS \sim \calS(p) \\ \bz \sim \{0,1\}^{\overline{\bS}}}} \left[ \Ex_{\bw \sim \{0,1\}^{\bS}} \left[\sqrt{I_{f(\cdot, \bz)}^-(\bw)} \right] \right] \notag \\
&\leq C \cdot \sqrt{p} \cdot \Ex_{\bx \sim \hypercube} \left[\sqrt{I_{f}^-(\bx)}\right].  \label{eq:p-dist}
\end{align}

It follows from the analysis of Dodis et al.~\cite{DGLRRS99} that $\Psi_f(1) \geq \eps_f$. Also note that $\Psi_f(0) = 0$.
We apply \eqref{eq:p-dist} for $p = 1, \frac{1}{2}, \frac{1}{4} \dots$ and consider the resulting telescoping sum:
\begin{align*}
\eps_f\leq \Psi_f(1) - \Psi_f(0) & = \sum_{i=0}^{\infty} \left(\Psi_f(2^{-i}) - \Psi_f(2^{-i-1})\right) \\
& \leq C \cdot \sum_{i=0}^{\infty} 2^{-i/2} \cdot \Ex_{\bx \sim \hypercube}\left[\sqrt{I_{f}^-(\bx)}\right] \\
& \leq \frac{\sqrt{2}}{\sqrt{2}-1}\ C \cdot \Ex_{\bx \sim \hypercube} \left[ \sqrt{I_f^-(\bx)}\right],
\end{align*}
completing the proof of Theorem~\ref{thm:kms-tal}. Similarly, Proposition~\ref{prop:improv-tal-obj} implies Theorem~\ref{thm:kms}.

\section{A Nonadaptive Lower Bound for Erasure-Resilient Testers}\label{sec:ER-lb}
In this section, we prove Theorem~\ref{thm:mono-ER-lb} that gives a lower bound on the query complexity of erasure-resilient testers of monotonicity, unateness and the $k$-junta property. We prove the lower bound by constructing two distributions $\Dyes$ and $\Dno$ on input functions $f\colon \hypercube\to \{0,1,\bot\}$ that are hard to distinguish for every nonadaptive tester and then applying Yao's Minimax principle~\cite{Yao77}.

Recall that $\bot$ denotes an erased function value. We say that a function $f\colon \hypercube \to \{0,1,\bot\}$ is {\em $\alpha$-erased} if at most an $\alpha$ fraction of its values are erased. If $\alpha$ is not specified, we call such a function {\em partially erased}. A {\em completion} of a partially erased function $f \colon \hypercube\to \{0,1,\bot\}$ is a function $f' \colon \hypercube\to \{0,1\}$  that agrees with $f$ on all nonerased values, that is, for all $x\in \hypercube$, if $f(x) \neq \bot$ then $f'(x)=f(x)$.
A partially erased function
has property $\cal P$ 
if there exists
a completion of $f$ that has $\cal P$. A partially erased function is $\eps$-far from
having property $\cal P$ 
if every completion of $f$ is $\eps$-far from
having property $\cal P$.

\begin{proof}[Proof of Theorem~\ref{thm:mono-ER-lb}]
We start by defining distributions $\Dyes$ and $\Dno$ on $\alpha$-erased functions. Later, we show that
$\Dyes$ is over monotone functions whereas most of the probability mass of $\Dno$ is on functions that are $\eps$-far from monotone. Interestingly, the same distributions work to prove our lower bounds for unateness and $k$-juntas: all functions in the support of $\Dyes$ are unate (because they are monotone) and also $n/2$-juntas whereas $\Dno$ is mostly supported on functions that are $\eps$-far from unate and $\eps$-far from $n/2$-juntas. The core of the argument is demonstrating that the two distributions are hard to distinguish for nonadaptive testers that make too few queries.

\begin{definition}
For a set $S\subseteq [n]$, let $\overline{S}$ denote the complement set  $[\n] \setminus S,$ and let $\{0,1\}^S$ denote the restriction of $\hypercube$ to the dimensions in $S$.
For every $x \in \hypercube$, let $|x|$ denote the Hamming weight of $x$, and let $x_S$ denote the vector $x \in \hypercube$ restricted to the dimensions in the set $S \subseteq [\n]$.
\end{definition}

Let $\n$ be a multiple of 4.
We first describe a collection of objects needed for defining the distributions $\Dyes$ and $\Dno$.
As before, we use the convention that random variables are boldface whereas fixed quantities use standard typeface.

\begin{itemize}
\item {\bf The set $M$ of {\em control dimensions} and the {\em control substring} $x_M$.} We partition $[n]$ into two sets of size $n/2$: the set $M$ of {\em control dimensions} and the set $\overline{M}$. We call $x_M$ the {\em control substring} of $x$.
    The random variable $\bM$ is a uniformly random subset of $[\n]$ of size $\n/2$.
\item {\bf The {\em subcube partition set} $P_{M}$ and {\em action} subcubes}. Let
$\Psi_{M} = \{x_M \in \{0,1\}^M : |x_M| =\frac{\n}{4}\}$
 denote the set of all control substrings of $x$ which lie in the middle layer of the subcube $\{0,1\}^M$.
 Each $z\in\Psi_M$ corresponds to a subcube of the form $\{0,1\}^{\overline{M}}$ with the vertex set comprised of points $x$ with $x_M=z.$ All such subcubes are called {\em action} subcubes.
 The {\em subcube partition set} $P_{M}$ is a subset of $\Psi_M$.
The random variable $\bP_M$ is a uniformly random subset of $\Psi_M$, that is, each $x_M \in \Psi_M$ is included in $\bP_M$ independently with probability $1/2$.
\item {\bf The functions $g_{M, P_M}$.} Recall the parameter $\kappa \in(0,1/2)$ from Theorem~\ref{thm:mono-ER-lb}. For a fixed setting of $M \subset [n]$ of size $n/2$ and a subcube partition set $P_M \subseteq \Psi_M$, we define the function $g_{M, P_{M}}\colon\hypercube \to {\{0,1,\bot,0^\star,1^\star\}}$ as follows:
$$
g_{M, P_M}(x) =
\begin{cases}
1 & \text{ if } |x_M| > \frac{\n}{4}; \\
0 & \text{ if } |x_M| < \frac{\n}{4}; \\
\bot & \text{ if } |x_M| = \tfrac{\n}{4} \text{ and } |x_{\overline{M}}| \in \left[\tfrac{\n}{4} - n^\kappa,  \tfrac{\n}{4} + n^\kappa\right];\\
1^\star & \text{ if } x_M \in P_M \text{ and } |x_{\overline{M}}| \notin \left[\tfrac{\n}{4} - n^\kappa,  \tfrac{\n}{4} + n^\kappa\right]; \\
0^\star & \text{ otherwise.}
\end{cases}
$$
\end{itemize}

To sample a function from $\Dyes$ and $\Dno$, we pick a uniformly random subset $\bM \subset [n]$ of size $\n/2$ of control dimensions and a uniformly random subcube partition set $\bP_{\bM}\subseteq\Psi_{\bM}$. A function sampled from $\Dyes$ and $\Dno$ is identical to $g_{\bM, \bP_{\bM}}$ on points $x \in \{0,1\}^n$ for which $g_{\bM, \bP_{\bM}}(x)\in \{0,1,\bot\}$, but differs on the remaining values (see Figures~\ref{fig:lb-construction-Dyes} and~\ref{fig:lb-construction-Dno}). Specifically, for functions in $\Dyes$, the values $0^\star$ and $1^\star$ are replaced with $0$ and $1$, respectively. That is, $\boldf^+ \sim \Dyes$ is defined by sampling $\bM$ and $\bP_{\bM}$ and letting
\[
\boldf^+(x) =
\begin{cases}
g_{\bM, \bP_{\bM}}(x) & \text{if } g_{\bM, \bP_{\bM}}(x)\in\{0,1,\bot\}; \\
0 & \text{if } g_{\bM, \bP_{\bM}}(x)=0^\star;\\
1 & \text{if } g_{\bM, \bP_{\bM}}(x)=1^\star. \\
\end{cases}
\]
For functions in $\Dno$, the value $0^\star$ is replaced with the majority function, denoted $\maj(\cdot),$ evaluated on the bits indexed by $\overline{\bM},$ whereas $1^\star$ is replaced with the antimajority of those bits.
That is, $\boldf^- \sim \Dno$ is defined by sampling $\bM$ and $\bP_{\bM}$ and letting
\[
\boldf^-(x) =
\begin{cases}
g_{\bM, \bP_{\bM}}(x) & \text{if } g_{\bM, \bP_{\bM}}(x)\in\{0,1,\bot\}; \\
\maj(x_{\obM}) & \text{if } g_{\bM, \bP_{\bM}}(x)=0^\star;\\
1 - \maj(x_{\obM}) & \text{if } g_{\bM,\bP_{\bM}}(x)=1^\star. \\
\end{cases}
\]

\begin{figure}
    \centering
    \includegraphics[scale=0.75]{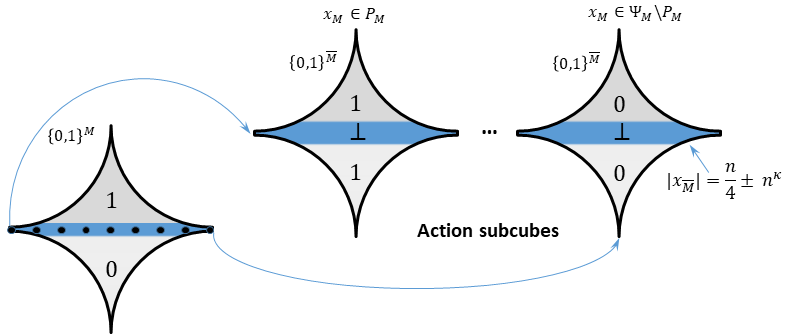}
    \caption{Functions $\boldf^+ \sim \Dyes$ defined with respect to control dimensions $M$ and the subcube partition set $P_{M}.$}
    \label{fig:lb-construction-Dyes}
\end{figure}

\begin{figure}
    \centering
    \includegraphics[scale=0.75]{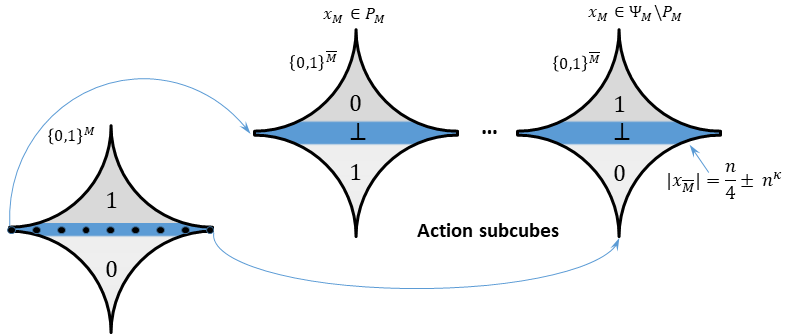}
    \caption{Functions $\boldf^- \sim \Dno$ defined with respect to control dimensions $M$ and the subcube partition set $P_{M}.$}
    \label{fig:lb-construction-Dno}
\end{figure}

\begin{lemma}\label{lem:alpha}
There is an $\alpha = O(1/n^{1-\kappa})$
for which every function in the support of the distributions $\Dyes$ and $\Dno$ is $\alpha$-erased.
\end{lemma}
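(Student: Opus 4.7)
The plan is straightforward: identify the erased region from the definition of $g_{M, P_M}$, count its size exactly, and bound the resulting fraction.

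By construction, a function $\boldf$ drawn from either $\Dyes$ or $\Dno$ takes the value $\bot$ on a point $x \in \hypercube$ if and only if $|x_M| = n/4$ and $|x_{\overline M}| \in [n/4 - n^\kappa, n/4 + n^\kappa]$, where $M = \bM$ is the realized set of control dimensions. Since the conditions on $x_M$ and $x_{\overline M}$ are independent (they involve disjoint coordinate sets, each of size $n/2$), the number of erased points equals
\[
\binom{n/2}{n/4} \cdot \sum_{k=n/4-n^\kappa}^{n/4+n^\kappa} \binom{n/2}{k}.
\]
Notice this count depends only on $|M| = n/2$, not on the specific realizations of $\bM$ or $\bP_{\bM}$, so the bound will apply uniformly to every function in the support.

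Next I would use Stirling's approximation to bound the central binomial coefficient as $\binom{n/2}{n/4} = \Theta(2^{n/2}/\sqrt{n})$. For the sum, I would use the monotonicity of $\binom{n/2}{k}$ in $k$ on either side of $n/4$ to conclude
\[
\sum_{k=n/4-n^\kappa}^{n/4+n^\kappa} \binom{n/2}{k} \leq (2n^\kappa + 1) \cdot \binom{n/2}{n/4} = O\!\left(\frac{n^\kappa \cdot 2^{n/2}}{\sqrt{n}}\right).
\]
Multiplying the two estimates gives a total count of $O(n^\kappa \cdot 2^n / n) = O(2^n / n^{1-\kappa})$ erased points. Dividing by $2^n$ yields the claimed $\alpha = O(1/n^{1-\kappa})$.

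The main obstacle is nothing more than executing these estimates cleanly; there is no conceptual difficulty, since the set of erased points has a product structure across $M$ and $\overline M$ and thus factors the count. One small care-point is ensuring that $n/4 \pm n^\kappa$ remain valid indices in $\{0, 1, \dots, n/2\}$, which holds for sufficiently large $n$ given that $\kappa < 1/2$.
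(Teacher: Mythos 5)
Your proposal is correct and matches the paper's argument: both count the erased points as (number of middle-layer prefixes in $\{0,1\}^M$) times (points in the $2n^\kappa+1$ middle layers of each action subcube), bound each layer by the central binomial coefficient $\binom{n/2}{n/4} = O(2^{n/2}/\sqrt{n})$, and divide by $2^n$. No gaps.
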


\begin{proof}
Consider a partially erased function $f\colon \{0,1\}^n \to \{0,1\}$ in the support of $\Dyes$ or $\Dno$. Let $M$ be the set of control dimensions used in defining $f$. The function $f$ is erased in the middle $2n^{\kappa} + 1$ layers of every action subcube $\{0,1\}^{\overline{M}}$ for which the control substring $x_M$ is in the middle layer of the subcube $\{0,1\}^M$. 
Since $|M|=|\overline{M}|=n/2$, the number of erased points is at most
\begin{align*}
\binom{\n/2}{\n/4} \cdot \binom{\n/2}{\n/4} (2n^\kappa+1)
=O\left(\left(\frac{2^{n/2}}{\sqrt{n/2}}\right)^2 n^\kappa\right)
=O\left(\frac{2^n}{n^{1-\kappa}}\right).
\end{align*}
Thus, the fraction of erasures in the constructed functions is $O(1/n^{1-\kappa})$.
\end{proof}

\subsection{Distance to Monotonicity, Unateness, and Being a Junta}
In this section, we analyze the distance to monotonicity, unateness, and being an $n/2$-junta for functions in the support of our hard distributions, $\Dyes$ and $\Dno$.
\begin{lemma}
Every function $f^+$ in the support of $\Dyes$ is monotone, unate, and an $n/2$-junta.
\end{lemma}
\begin{proof}
Consider a partially erased function $f^+$ in the support of $\Dyes.$
We show that $f^+$ can be completed to a monotone $n/2$-junta.
Let $M$  be the set of control dimensions and $P_M$ be the subcube partition set used in defining $f^+$. Define a completion $f'\colon \hypercube \to \{0,1\}$ of $f^+$ as follows. For all $x \in \hypercube$ with $f^+(x) = \bot$,
\begin{align*}
f'(x) =
\begin{cases}
1 &\text{if } x_M \in P_M, \\
0 &\text{otherwise.}
\end{cases}
\end{align*}
Since $|M|=n/2$ and $f'$ depends only on coordinates in $M,$ it is an $\n/2$-junta.

To prove that $f'$ is monotone, we show it is the disjunction of monotone functions. Let $h:\{0,1\}^n\to\{0,1\}$ be the indicator for $|x_M|>n/4$, that is, $h(x)=1$ iff $|x_M|>n/4$. For every $y\in P_M$, let $h_y:\{0,1\}^n\to\{0,1\}$ be the indicator for $x_M= y_M$. Functions $h$ and $h_y$ for all $y\in P_M$ are monotone, and $f'$ is the disjunction of these functions.
Hence, $f^+$ can be completed to a monotone $n/2$-junta.  Since monotone functions are unate, Lemma~\ref{lem:eps} holds.
\end{proof}

Next we define an event called $\eventfar$, observe that it happens with high probability,  and show that functions distributed according to $\Dno$ conditioned on $\eventfar$ are far from the three properties we are considering.
\begin{definition}[Event $\eventfar$ and distribution $\Dhat$]
Let $\eventfar$ be the event that $$\frac{|\Psi_M|}3\leq |P_M|\leq \frac{2|\Psi_M|}3.$$
Define distribution $\Dhat=\Dno|_{\eventfar}$.
\end{definition}

Recall that each element of $\Psi_{M}$ is included in $\bP_{M}$ independently with probability 1/2.
Therefore, for sufficiently large $n$, by a Hoeffding bound,
\begin{align}\label{eq:not-far}
\Pr[\overline{\eventfar}]
\leq 2 e^{2|\Psi_M|\cdot \frac 1 {6^2}}
\leq \frac 1{30}
\end{align}
 over the draw of $\boldf^- \sim \Dno$.
\begin{lemma}\label{lem:eps}
Every function $f^-$ in the support of $\Dhat$ has distance at least $\eps = \Omega\left(\frac{1}{\sqrt{\n}}\right)$ from monotonicity, unateness, and being an $n/2$-junta.
\end{lemma}

\begin{proof}
Consider a partially erased function $f^-$ in the support of $\Dhat$, and  let $M$ and $P_M$ be the set of control dimensions and the subcube partition set  used in defining $f^-$, respectively. We start by proving that $f^-$ is far from monotone. Even though it is technically not necessary, because it will follow from the fact that $f^-$ is far from unate, we choose to do it for ease of presentation. After that, we prove Claim~\ref{claim:eps-junta} that states that $f^-$ is far from being an $n/2$-junta. Finally, we build on ideas in these two proofs to show in Claim~\ref{claim:eps-unate} that $f^-$ is far from unate.

\begin{definition}[Decreasing pair]\label{def:decreasing-pair}
A pair of points $\{x,y\}$ is {\em decreasing} with respect to a function  $f$ if $x\prec y$, but $f(x)=1$ and $f(y)=0.$
\end{definition}
First, we prove that $f^-$ is $\eps$-far from monotone for some $\eps=\Omega(1/\sqrt{n})$ by showing that there exists a large matching of decreasing pairs with respect to $f^-$.
Consider all action subcubes  $\{0,1\}^{\overline{M}}$ for which $x_M\in P_M$. By definition, each such subcube is $n/2$-dimensional, and hence contains $2^{n/2}$ points. Since $\eventfar$ holds, there are at least $\binom{n/2}{n/4}\cdot \frac 1 3=\Omega(\frac {2^{n/2}} {\sqrt{n}})$ such subcubes.
By standard arguments (see\footnote{Fischer et al.~\cite[Lemma 22]{FLNRRS02} prove that {\em trimmed anti-oligarchy} functions are far from monotone by applying Hall's Theorem to argue that there is a large matching of decreasing edges. Given $B\subseteq[n],$ the {\em trimmed anti-oligarchy} function $f_B(x)$ is 1 if the Hamming weight of $x$ is large, 0 if it is small, and is equal to $1-Maj(x_B)$, otherwise. In particular, when $B=[n],$ this function is equal to $1-Maj(x)$ on a constant fraction of points $x$ (located in the middle layers of the hypercube). The functions on the actions subcubes under consideration in our lower are also anti-majorities on a constant fraction of all points $x$ (located in all but the middle layers of the subcubes).}, e.g.,~\cite[Lemma 22]{FLNRRS02}), in each such action subcube,
there is a matching of size $\Omega(2^{n/2})$ consisting of decreasing pairs with respect to $f^-$.
The values of at least half of the points in the matching need to be changed to make the function  $f^-$ monotone. Moreover, at least two thirds of the points in the action subcube are nonerased, and this matching contains all of them.
Since $\Omega(1/\sqrt{n})$ fraction of points participates in action subcubes with  $x_M\in P_M$, and the values of at least a third of these points need to be changed to make $f^-$ monotone, the distance from $f^-$ to monotonicity is $\Omega(1/\sqrt{n})$.

Next, recall from Definition~\ref{def:hypercube} that, for $i \in [\n]$, a hypercube edge whose endpoints differ in the $\ord{i}$ coordinate is called an $i$-edge.
\begin{definition}[A constant edge, independence of a variable]
An edge $\{x,y\}$ is {\em constant} with respect to a function $f$ if $f(x)=f(y)$. An edge that is neither decreasing nor constant is called {\em increasing}. A function $f$ is {\em independent} of a variable $i \in [n]$ if all $i$-edges are constant with respect to $f$.
\end{definition}

\begin{claim}\label{claim:eps-junta}
Every function $f^-$ in the support of $\Dhat$ has distance at least $\eps = \Omega\left(\frac{1}{\sqrt{\n}}\right)$ from being an $n/2$-junta.
\end{claim}

\begin{proof}
Observe that $f^-$ is an $n/2$-junta iff it is independent of at least  $n/2$ variables.

Next, for each $i\in M$, we show that $f^-$ is $\eps$-far from being independent of $i$. Fix $i\in M$.
We construct a large set $\match{i}$ of increasing 
$i$-edges $\{x,y\}$.
At least one of $f^-(x)$ and $f^-(y)$ for each such edge has to change to make $f^-$ independent of $i$. Since $\match{i}$ is a matching, $|\match{i}|/2^n$ is a lower bound on the distance from $f^-$ to functions that do not depend on variable $i$.

Recall that the set $\Psi_M = \{ x_M \in \{0,1\}^M : {|x_M|=\frac{\n}{4}} \}$ is the set of all control substrings of $x$ that lie in the middle layer of the subcube $\{0,1\}^M$. We define, for every dimension $i\in M,$
\[
\match{i}=\{\{x,y\} : \{x,y\} \text{ is an $i$-edge, } x_M \in \Psi_M, \text{ and } f(x)=x_i\}.
\]
Note that $x_i\neq y_i$ and, by construction of functions $g_{M, P_M}$,
we have $f^-(y)=g_{M, P_M}(y)=y_i.$ Therefore, $f^-(x)\neq f^-(y)$ for all $i$-edges $\{x, y\} \in \match{i}$.
(Note that all edges in $\match{i}$ are increasing. This fact will be used in the proof of Claim~\ref{claim:eps-unate}.)
For each $x_M\in\Psi_M$, more than a third of the points $x$ in the corresponding action subcube are assigned $f^-(x)=0$, and the same holds for $f^-(x)=1$. Since each action subcube has $2^{\n/2}$ points, the size of $\match{i}$ is at least
$\frac{1}{3} \cdot 2^{\n/2} \cdot |\Psi_M| = \frac{1}{3} \cdot 2^{\n/2} \cdot \binom{\n/2}{\n/4} = \Omega(\frac{2^n}{\sqrt{\n}})$. That is, the distance from $f^-$ to being independent of variable $i$ is at least $\eps$, where $\eps=\Omega(\frac 1 {\sqrt{n}})$.

Thus, if we change less than an $\eps$ fraction of values of $f^-$, we cannot eliminate the dependence on any of the $\n/2$ variables in $M$. The only remaining possibility to make $f^-$ an $n/2$-junta with fewer than $\eps\cdot 2^n$ modifications is to eliminate the dependence on all variables in $\overline{M}$. This can happen only if the modified function becomes constant on all the action subcubes, which again requires changing at least $\frac{1}{3} \cdot 2^{\n/2} \cdot |\Psi_M|$ values of $f^-$. Thus, $f^-$ is $\eps$-far from the set of $n/2$-juntas, where $\eps=\Omega(\frac 1 {\sqrt{n}})$.
\end{proof}

\begin{claim}\label{claim:eps-unate}
Every function $f^{-}$ in the support of $\hat{\mathcal{D}}^-$ has distance at least $\eps = \Omega\left(\frac{1}{\sqrt{n}} \right)$ from unateness.
\end{claim}

\begin{proof}
To show that $f^{-}$ is $\eps$-far from unate, let $r \in \{0,1\}^n$ be an arbitrary assignment of directions for each variable in $[n]$ (that is, $r_i = 0$ signifies that the variable $i$ is monotone non-decreasing, and $r_i = 1$ signifies that the variable $i$ is monotone non-increasing). We use $\oplus$ to represent bit-wise XOR of two vectors. It suffices to show that the function $g \colon \{0,1\}^n \to \{0, 1\},$ given by $g(x) = f^{-}(x \oplus r),$ is $\eps$-far from monotone.

Notice that if $r_i = 1$ for some $i \in M$, the function $g$ is $\Omega(1/\sqrt{n})$-far from monotone.
To see this, recall that we constructed a matching $\match{i}$ of $\Omega(2^n/\sqrt{n})$ increasing
$i$-edges in $f^-$ in the proof of Claim~\ref{claim:eps-junta}. If $r_i =1$, these $i$-edges are decreasing with respect to $g$, so the function is $\Omega(1/\sqrt{n})$-far from monotone. Henceforth, we assume $r_i = 0$ for all $i \in M$.

Let $R$ be the subset of dimensions $\{i\in \overline{M} : r_i = 1\}$. First consider the case when $|R|\leq n/4$.
Recall that each action subcube contains $2^{n/2}$ points. Consider all action subcubes  $\{0,1\}^{\overline{M}}$ for which $x_M\in P_M$.  Since $\eventfar$ holds, there are  $\Omega(\frac {2^{n/2}} {\sqrt{n}})$ such subcubes, as we argued before. We show that each of them contains a matching of $\Omega(2^{n/2})$ decreasing pairs with respect to $g,$ thus proving that $g$ is $\Omega(1/\sqrt{n})$-far from monotone.

We further partition action subcubes into smaller subcubes, each of which contains all points $x\in\hypercube$ with the same $x_R$ (and the same control substring $x_M$). Let $\cal C$ be the set of all such subcubes with $x_M\in P_M$ and the Hamming weight $|x_R|$ in the range $\big[\frac{|R|}2-\sqrt{n},\frac{|R|}2+\sqrt{n}\big].$ Since a constant fraction of $z\in\{0,1\}^R$ is in the specified range,
$\cal C$ contains $\Omega(2^{|R|})$ smaller subcubes for each action subcube we are considering.
It remains to show that each subcube in $\cal C$ contains a matching of $\Omega(2^{n/2-|R|})$ decreasing pairs with respect to $g.$

By definition of $f^-$, the value of $f^-(y)$ depends only on the Hamming weight of $y_{\overline M}$ for all points $y$ with $y_M\in P_M.$ In particular, $f^-(y)=0$ if $|y_{\overline M}|\geq \frac n 4 + n^\kappa$ and $f^-(y)=1$ if $|y_{\overline M}|\leq \frac n 4 - n^\kappa.$
That is, for every $x$ with $x_M\in P_M$,
\[
g(x)=f^-(x\oplus r)
=\begin{cases}
0 \text{ if } |(x\oplus r)_{\overline M}|\geq \frac n 4 + n^\kappa;\\
1 \text{ if } |(x\oplus r)_{\overline M}|\leq \frac n 4 - n^\kappa.
\end{cases}
\]
Recall that $R\subseteq {\overline M}.$
Since $(x\oplus r)_i=1-x_i$ for all $i\in R$, and $(x\oplus r)_i=x_i$ for all $i\notin R$, we get
\[
|(x\oplus r)_{\overline M}|= |(x\oplus r)_{R}|+ |(x\oplus r)_{{\overline M}\setminus R}|
= |R|-|x_R|+ |x_{{\overline M}\setminus R}|.
\]

Fix a subcube in $\cal C$ and note that it is of the form $\{0,1\}^{{\overline M}\setminus R}.$ Recall that $|\overline M|=n/2$, so the subcube contains $2^{n/2-|R|}$ points. The quantity $|R|-|x_R|$ is the same for all points $x$ in the subcube. Moreover, this quantity is in the range   $\big[\frac{|R|}2-\sqrt{n},\frac{|R|}2+\sqrt{n}\big].$ We claim that $g$ evaluates to 1 on the points in the bottom layers of the subcube, and that it evaluates to 0 on the points in the top layers.
Specifically, if $|x_{{\overline M}\setminus R}|\leq \frac {|{\overline M}\setminus R|}2-\sqrt{n}-n^\kappa,$ which holds for a constant fraction of points $x$ in the subcube, then
\[
|(x\oplus r)_{\overline M}|
= |R|-|x_R|+ |x_{{\overline M}\setminus R}|
\leq \frac{|R|}2+\sqrt{n} + \frac {|{\overline M}\setminus R|}2-\sqrt{n}-n^\kappa
= \frac {|{\overline M}|}2-n^\kappa
=\frac n 4 -n^\kappa.
\]
That is, in this case, $g(x)=0.$
Similarly, if $|y_{{\overline M}\setminus R}|\geq \frac {|{\overline M}\setminus R|}2+\sqrt{n}+n^\kappa,$ which also holds for a constant fraction of points in the subcube, then
\[
|(y\oplus r)_{\overline M}|
\geq \frac{|R|}2-\sqrt{n} + \frac {|{\overline M}\setminus R|}2+\sqrt{n}+n^\kappa
=\frac n 4 +n^\kappa.
\]
That is, in this case, $g(y)=0.$  Similarly to the case of monotonicity, by a standard argument \cite[Lemma 22]{FLNRRS02}, there is a matching of decreasing pairs $\{x,y\}$ with respect to $g$ that matches all the points $x$ and $y$ described above
and, consequently has size $\Omega(2^{n - |R|}/\sqrt{n})$.
We obtained the desired bound on the matching size and, therefore, on the distance to unateness.

The argument for the case when $|R| > n/4$ follows similarly by symmetry. We consider action subcubes for which the control substrings are not in $P_M$. For each such subcube, $f^-$ evaluates to a majority on $x_{\overline M}$, and variables in $R$ are flipped in $g$, allowing us to construct a large matching of decreasing pairs with respect to $g$.
\end{proof}

This completes the proof of Lemma~\ref{lem:eps}.
\end{proof}

\subsection{Indistinguishability of the Hard Distributions}
Next, we show that the distributions $\Dyes$ and $\Dhat$ are hard to distinguish for nonadaptive testers.
Consider a deterministic tester that makes $q$ queries. Let $\bA_q(f)$ be the sequence of query-answer pairs $(x,f(x))$ obtained by the tester on input $f$. For every distribution $\cal D$ on input functions, define ${\cal D}$-view to be the distribution on $\bA(\boldf^+)$ when $\boldf^+\sim{\cal D}$. We use the version of Yao's principle stated in~\cite{RS06} that asserts that to prove a lower bound $q$ on the worst-case query complexity of a randomized algorithm, it is enough to give two distributions $\Dyes$ and $\Dhat,$ on positive and negative instances, respectively, for which
the statistical distance between $\Dyes$-view and $\Dhat$-view is less than 1/3.

We start by analyzing how a tester can distinguish $\Dyes$ and $\Dno$. The key point is the only way to do it is by querying a pair of points $x,y \in \hypercube$ that fall in the same action subcube, but in different nonerased layers---one below the erased layers, the other above the erased layers. If the tester queries no such pair, then its view (that is, the distribution on the sequence of query-answer pairs it obtains) is identical for the two cases: $\boldf^+\sim\Dyes$ and $\boldf^-\sim\Dno$.

Let $\eventbad$ be the event that one of the $\binom{q}{2}$ pairs of points the tester queries ends up in the same action subcube, on different sides of erasures, as discussed above. Next, we show that conditioned on $\eventbad$ not occurring, the view of the tester is the same for both distributions.
\begin{claim}\label{claim:distributions}
$\displaystyle
\Dyes\text{-view}|_{\overline{\eventbad}} = \Dno\text{-view}|_{\overline{\eventbad}}.
$
\end{claim}
\begin{proof}
Consider an arbitrary fixed set of control dimensions $M \subset [n]$ of size $n/2$, and let $\Psi_{M}$ be the set of control substrings. Partition the set of queries $x_1, \dots, x_q \in \{0,1\}^n$ with nonerased values according to their control substrings in $\Psi_{M}$; namely, for $z \in \Psi_M$, let $Q_z \subset \{ x_1, \dots, x_q\}$ be the set of queries $x$ for which $x_M=z$  and $x_{\overline{M}}$ does not lie in the middle $2n^{\kappa}$ layers of its action subcube. This partition depends only on $M$, and not on the set $\bP_M$.

Suppose, furthermore, that any two queries $x, y \in \hypercube$ falling in the same action subcube (i.e., from the same part $Q_z$) are either both above the erased layers or both below the erased layers, and notice that this implies $\boldf(x) = \boldf(y)$ whenever $x, y \in Q_z$, for $\boldf$ sampled from $\Dyes$ as well as $\Dno$. We will show that the distribution over answers to queries for a fixed $M$ (and a random $\bP_M)$ is exactly the same for $\Dyes$ and $\Dno$. In both cases, the answer to a query $x\in \hypercube$ is a function of the control substring $x_{M}$ and is independent for two queries $x, y \in \hypercube$ with different control substrings. In fact, for each control substring $z \in \Psi_{M},$ the value of every $\boldf(x)$ with $x_M = z$ is a uniformly random bit.

In order to see why, notice that for $\boldf^+ \sim \Dyes$, every $x \in Q_z$ has value $\boldf^+(x) = 1$ if and only if $z \in \bP_{M}$, which occurs with probability $1/2$. For $\boldf^- \sim \Dno$, we have two cases. The first case is when $x \in Q_z$ has $|x_{\overline{M}}| \leq n/4 - n^{\kappa}$; then, $\boldf^-(x) = 1$ if and only if $z \in \bP_M$, which occurs with probability $1/2$. The second case is when $x\in Q_z$ has $|x_{\overline{M}}| \geq n/4 + n^{\kappa}$; then, $\boldf^+(x) = 0$ if and only if $z \notin \bP_M$, which occurs with probability $1/2$.
\end{proof}

\begin{claim}\label{claim:bad}
For a deterministic nonadaptive tester making $q \leq 2^{n^\kappa}$ queries,
 $$\Pr[\eventbad]<1/8$$
over the draw of $\bM \subset [n]$ of size $n/2.$
\end{claim}
\begin{proof}
We want to bound the probability over the draw of $\bM$ that any two queries $x, y \in \{0,1\}^n$ which have the same control substring are either both above the erased layers, or both below the erased layers in their action subcube. In particular, observe that the Hamming weights of any such $x$ and $y$ must differ by at least
$2n^\kappa+2.$
Consequently, $x$ and $y$ differ on at least
$2n^\kappa+2$
bits. Next, we upper bound the probability that two queries $x, y \in \{0,1\}^n$ which differ on at least $2n^{\kappa} + 2$ dimensions have the same setting of coordinates in $\bM$.

Let $T = \{i \in [n] : x_i \neq y_i \}$ denote the set of all coordinates on which the points $x$ and $y$ differ. Then $|T|\geq 2n^\kappa+2$. Observe that $x_{\bM} = y_{\bM}$ iff $T \cap \bM = \emptyset$. Since $\bM$ is a uniformly random subset of $[n]$ of size $n/2$,
\begin{align*}
\Pr_{\bM}[T \cap \bM = \emptyset] = \frac{\binom{n-|T|}{n/2}}{\binom{n}{n/2}} &= \frac{\frac{(n-|T|)!}{(n/2)!\cdot(n/2-|T|)!}}{\frac{n!}{(n/2)!\cdot(n/2)!}} \\
&= \frac{(n/2)!}{(n/2-|T|)!} \cdot \frac{(n-|T|)!}{n!} \\
&= \frac{n/2\cdot(n/2-1)\cdots(n/2-|T|+1)}{n\cdot(n-1)\cdots(n-|T|+1)}
\leq 2^{-|T|}.
\end{align*}
Let $\eventbad$ be the event that one of the $\binom{q}{2}$ pairs of points the tester queries ends up in the same action subcube, on different sides of erasures, as discussed above. Then, by a union bound,
\[\Pr[\eventbad] < \frac{q^2}2\cdot 2^{-|T|} \leq \frac 1 2 \cdot 2^{2n^\kappa}\cdot 2^{-2n^\kappa-2}= \frac 1 8.\]

\end{proof}
\begin{definition}[Notation for statistical distance]
For two distributions ${\cal D}_1$ and ${\cal D}_2$ and a constant $\delta,$ let ${\cal D}_1\approx_\delta {\cal D}_2$ denote that the statistical distance between ${\cal D}_1$ and ${\cal D}_2$ is at most $\delta.$
\end{definition}

\begin{lemma}\label{lem:indistinguishability}
For a deterministic nonadaptive tester making $q \leq 2^{n^\kappa}$ queries,
\[\Dyes\text{-view} \approx_{9/28}\Dhat\text{-view}.\]
\end{lemma}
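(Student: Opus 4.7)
The plan is to couple $\Dyes$ and $\Dno$ using the same random set $\bM$ of control dimensions, identify a single ``bad event'' under which the two coupled views can diverge, show that outside this event the views are essentially identically distributed, and bound the probability of the bad event via a union bound over pairs of queries.

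Fix the tester's $q$ nonadaptive queries $x^{(1)},\dots,x^{(q)}$. Conditional on $\bM=M$, call $x^{(i)}$ \emph{active} if $|x^{(i)}_M|=\frac{n}{4}$ and $|x^{(i)}_{\overline{M}}|\notin[\frac{n}{4}-n^\kappa,\frac{n}{4}+n^\kappa]$; for every non-active query, $g_{M,P_M}(x^{(i)})\in\{0,1,\bot\}$, so the answer agrees between $\Dyes$ and $\Dno$. For an active $x^{(i)}$ with prefix $z=x^{(i)}_M\in\Psi_M$, letting $B_z$ be the indicator that $z\in \bP_M$, the answer in $\Dyes$ equals $1-B_z$ while the answer in $\Dno$ equals $(1-B_z)\oplus\maj(x^{(i)}_{\overline{M}})$. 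Define the bad event $\mathcal{B}$ (over $\bM$) as the existence of active queries $x^{(i)},x^{(j)}$ with $x^{(i)}_{\bM}=x^{(j)}_{\bM}$ but $\maj(x^{(i)}_{\obM})\neq\maj(x^{(j)}_{\obM})$.

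Under $\neg\mathcal{B}$, for each prefix $z$ shared by one or more active queries there is a common value $\mu_z:=\maj(x^{(i)}_{\overline{M}})$, so all $\Dno$-answers at queries with that prefix are the corresponding $\Dyes$-answers XORed with the same constant $\mu_z$. Since $B_z$ is marginally a uniform bit, XORing by a constant preserves its distribution; moreover, because $|\Psi_M|=\binom{n/2}{n/4}\geq 2^{n/2}/\sqrt{n}$ dominates $q^2\leq 2^{2n^\kappa}$ for $\kappa<1/2$, the restriction of $\{B_z\}$ to the (at most $q$) relevant prefixes is within TV $O(q^2/|\Psi_M|)=o(1)$ of product-Bernoulli$(1/2)$. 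Consequently, conditional on $\bM$ and $\neg\mathcal{B}$, the $\Dyes$- and $\Dno$-views are identically distributed up to a $o(1)$ TV error.

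To bound $\Pr[\mathcal{B}]$, fix a pair $(x^{(i)},x^{(j)})$ with Hamming distance $d$, and let $D$ be their set of differing coordinates. The equality $x^{(i)}_{\bM}=x^{(j)}_{\bM}$ forces $D\subseteq\obM$, which occurs with probability $\binom{n-d}{n/2}/\binom{n}{n/2}\leq 2^{-d}$. Given matching prefixes, $|x^{(i)}_{\overline{M}}|-|x^{(j)}_{\overline{M}}|=|x^{(i)}|-|x^{(j)}|$, whose absolute value is at most $d$; the bad event further requires $|x^{(i)}_{\overline{M}}|$ and $|x^{(j)}_{\overline{M}}|$ to lie on opposite sides of $n/4$ and each at distance strictly more than $n^\kappa$, which forces $d>2n^\kappa$. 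Summing over pairs,
\[\Pr[\mathcal{B}] \leq \binom{q}{2}\cdot 2^{-(2n^\kappa+1)} \leq \tfrac{1}{2}\cdot 2^{2n^\kappa}\cdot 2^{-(2n^\kappa+1)}=\tfrac{1}{4},\]
and combining with the $o(1)$ slack from the previous paragraph yields $\mathrm{TV}(\Dyes\text{-view},\Dno\text{-view})\leq 1/4+o(1)\leq 2/7$ for all sufficiently large $n$.

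The main obstacle I anticipate is the intermediate step above: formally controlling the $o(1)$ TV slack caused by $\bP_{\bM}$ being sampled uniformly from subsets of fixed size $|\Psi_{\bM}|/2$ rather than as i.i.d.\ Bernoulli bits. One has to verify that the restriction of $\{B_z\}$ to the (at most $q$) relevant prefixes is within $o(1)$ of i.i.d.\ uniform, so that the XOR-invariance argument is valid; once that is settled, the proof is completed by the union bound computation above.
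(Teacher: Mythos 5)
Your proposal is correct and takes essentially the same route as the paper: the same bad event (two queries landing in one action subcube on opposite sides of the erased band), the same bound $\Pr[D\subseteq\obM]\le 2^{-d}$ with $d>2n^\kappa$ forced by the weight gap, and the same union bound over $\binom{q}{2}$ pairs; the only structural difference is that the paper converts $\Pr[\mathsf{BAD}]\le 1/8$ into the $2/7$ bound via a conditioning claim from \cite{RS06}, while you decompose the total variation distance by conditioning on $\bM$ (valid, since your bad event is determined by $\bM$ and the fixed queries). The step you flag is a standard fact---the marginal of the fixed-size set $\bP_{\bM}$ on at most $q$ prefixes is within $O(q^2/|\Psi_{\bM}|)$ of i.i.d.\ Bernoulli$(1/2)$ bits (sampling with vs.\ without replacement)---and in fact the paper sidesteps it by asserting that, conditioned on no bad pair, the two views are exactly identical, which holds only up to exactly this negligible error, so your explicit treatment is if anything slightly more careful.
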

\begin{proof}
By Claim~\ref{claim:distributions}, conditioned on $\eventbad$ not occurring, the view of the tester is the same for distributions $\Dyes$ and $\Dno$:
\[
\Dyes\text{-view}|_{\overline{\eventbad}} =\Dno\text{-view}|_{\overline{\eventbad}}.
\]
Conditioning on $\overline{\eventbad}$ does not significantly change the view distributions. We use the following claim \cite[Claim 4]{RS06} to formalize this statement.

\begin{claim}[\cite{RS06}]\label{claim:conditioning} Let $E$ be an event that happens with probability at least $\delta=1-1/a$ under the distribution
${\cal D}$ and let ${\cal B}$ denote distribution ${\cal D}|_E$. Then
${\cal B}\approx_{\delta'} {\cal D}$ where $\delta'=1/(a-1)$.
\end{claim}

Applying  Claims~\ref{claim:bad} and~\ref{claim:conditioning} twice, we get
\[
\Dyes\text{-view}\approx_{1/7}
\Dyes\text{-view}|_{\overline{\eventbad}}
=\Dno\text{-view}|_{\overline{\eventbad}}
\approx_{1/7}\Dno\text{-view}.
\]
Similarly, recalling that $\Dhat=\Dno|_{\eventfar},$ by (\ref{eq:not-far}) and Claim~\ref{claim:conditioning}, we get
$$\Dno\text{-view} \approx_{1/29} \Dno\text{-view}|_{\eventfar}
=\Dhat\text{-view}.$$
Since $1/7+1/7+1/29<9/28$, this completes the proof of Lemma~\ref{lem:indistinguishability}.
\end{proof}

Theorem~\ref{thm:mono-ER-lb} follows by Yao's Principle.
\end{proof}

\section{From Tolerant Testing to Distance Approximation}\label{sec:tolerant-to-dist-approx}
The following theorem shows how to convert a tolerant tester to a distance approximation algorithm.

\begin{theorem} \label{thm:tolerant-to-dist-approx}
Consider input objects whose size is measured with respect to a parameter $n$, and let $\cP$ be a property of these objects.
Let $g(n)$ be a polynomial in $n$ such that $g(n) = \omega(1)$.
Let $\cA$ be a tolerant tester of the property $\cP$ that gets a parameter $\eps \in (0, 1/2)$ and oracle access to an input object $f$, makes $Q_\cA (n, \eps)$ queries, and outputs $\close$ or $\far$ as follows:
\begin{enumerate}
\item If $\dist(f, \cP) \leq \frac{\eps}{g(n)}$, then $\cA(\eps, f)$ outputs $\close$ with probability at least $2/3$;
\item If $\dist(f, \cP) \geq \eps$, then $\cA(\eps, f)$ outputs $\far$ with probability at least $2/3$.
\end{enumerate}
Suppose $Q_\cA (n, \eps)$ has at least linear dependence on $1/\eps$. Then, there exists a distance approximation algorithm $\mathcal{B}$ for the property $\cP$ that, given a parameter $\alpha \in (0,1/2)$ and oracle access to an input object $f$ such that $\dist(f, \mathcal{P}) \geq \alpha$, returns an estimate that, with probability at least $2/3$, is a $2g(n)$-approximation to $\dist(f, \cP)$. The query complexity of $\cB$ is $O(Q_\cA(n, \alpha) \cdot \log \log (1/\alpha))$.
\end{theorem}

\begin{proof}
The distance approximation algorithm $\cB$ is described in Algorithm~\ref{alg:tester-to-approx}.

\begin{algorithm}
\caption{Distance Approximation Algorithm $\mathcal{B}(\alpha, f)$} \label{alg:tester-to-approx}
\SetKwInOut{Input}{input} \SetKwInOut{Output}{output}
\SetKwFor{RepeatTimes}{repeat}{times}{end}

\Input{minimum distance parameter $\alpha \in (0,1/2)$; oracle access to the input object $f$.}
\Output{$2g(n)$ approximation $\hat{\eps}$ to $\dist(f, \mathcal{P})$.}

\DontPrintSemicolon
\BlankLine

\nl \label{step:b-for}\For{$i = 1$ {\em to} $\floor{\log \frac{1}{\alpha}}$}{
\nl Set $c \gets 0$ and $t \gets \ceil{18 \ln (3 \log(1/\alpha))}$. \;
\nl \RepeatTimes{$t$}{
\nl Run $\cA(2^{-i}, f)$. If it outputs \far, $c \gets c + 1$.\label{step:run-a}
}
\nl \lIf{$c \geq t/2$}{\Return{$\hat{\eps} = 2^{-i + 1}$}\label{step:amplified-a}.}
}
\nl \Return{$\hat{\eps} = \alpha$}.
\end{algorithm}

For each $i \in \left[\floor{\log\frac{1}{\alpha}}\right]$, let $E_i$ denote the event that $c$ calculated by the algorithm in the $\ord{i}$ iteration of the {\bf for} loop satisfies $c \geq t/2$ when $\dist(f, \cP) \geq 2^{-i}/g(n)$ and $c < t/2$ when $\dist(f, \cP) \leq 2^{-i}$.
Event $\overline{E_i}$ occurs when $\cA$ errs in at least $t/2$ of its runs in Step~\ref{step:run-a}.
In each run, $\cA$ errs with probability at most $1/3$.
By Hoeffding bound, for each $i \in \left[\floor{\log({1}/{\alpha})} \right]$, the probability $\Pr[\overline{E_i}]  < 1/(3 \log(1/\alpha))$.
Let $E = \bigcap_{i \in [\floor{\log (1/\alpha)} ]} E_i$. By the union bound, $\Pr[\overline{E}] < 1/3$, which implies that $\Pr[E] > {2}/{3}$.
For the rest of the proof, condition on the event $E$ happening.
If $\hat{\eps} = 2^{-i+1}$, then it implies that for all $j < i$,
the algorithm determined that $\dist(f, \cP) \leq 2^{-j}$. Substituting $j = i-1$ yields $\dist(f, \cP) \leq 2^{-i+1} = \hat{\eps}$. Similarly, in the $\ord{i}$ iteration, the algorithm determined that $\dist(f, \cP) \geq 2^{-i}/g(n) = \hat{\eps}/(2g(n))$.
Hence, with probability at least $2/3$, the output $\hat{\eps}$ of the algorithm $\cB$ satisfies $\dist(f, \cP) \leq \hat{\eps} \leq 2g(n) \cdot \dist(f, \cP)$, completing the proof of the approximation guarantee.

The query complexity of $\cB$ is at most $t \cdot \sum_{i=1}^{\floor{\log (1/\alpha)}} Q_\cA(n, 2^{-i}).$
Since $Q_\cA(n, \eps)$ has at least linear dependence on $1/\eps$, the sum
is dominated by the last term. Therefore, the query complexity is
 $O(Q_\cA(n, \alpha) \cdot
\log \log (1/\alpha))$.
\end{proof}

\paragraph{Acknowledgments.} We thank Deeparnab Chakrabarty and C. Seshadhri for useful discussions and, in particular, for mentioning an adaptive algorithm for approximating the distance to monotonicity up to a factor of $O(\sqrt{n})$.

\bibliography{references}

\end{document}